\newif\ifSpringer

\newif\ifFinal

\newif\ifBlind

\Springertrue
\Springerfalse

\Finaltrue

\Blindfalse

\ifSpringer
\documentclass[runningheads]{llncs}
\else
\documentclass{article}
\fi

\usepackage{graphicx} 
\usepackage[font=footnotesize,labelfont=bf]{caption}
\usepackage[utf8]{inputenc}
\usepackage[T1]{fontenc}
\usepackage{amsmath,amssymb,amsfonts,bm}
\usepackage[hypertexnames=false]{hyperref}
\usepackage[inline]{enumitem}
\usepackage{listings}
\usepackage{xcolor}
\usepackage{graphicx}
\usepackage{wrapfig}
\usepackage[subrefformat=parens,labelformat=parens]{subcaption}
\usepackage{cite}
\usepackage[ruled,linesnumbered,lined,noend]{algorithm2e}
\usepackage{multirow}
\usepackage{array}
\usepackage{multicol}
\usepackage{longtable}
\usepackage{booktabs}
\usepackage{orcidlink}
\usepackage{xparse}

\ifSpringer
\else
\usepackage{amsthm}
\fi

\AtBeginDocument{\setlength\abovedisplayskip{4pt plus 2pt minus 2pt}}
\AtBeginDocument{\setlength\belowdisplayskip{\abovedisplayskip}}

\setlist[itemize]{noitemsep, topsep=4pt plus 2pt minus 2pt}

\DeclareMathOperator*{\argmax}{argmax}

\hypersetup{
    colorlinks=true,
    linkcolor=blue,
    filecolor=magenta,
    urlcolor=cyan,
    citecolor=purple,
}

\ifSpringer
\else
\newcommand*{\inst}[1]{\textsuperscript{#1}\:}

\theoremstyle{plain}
\newtheorem{theorem}{Theorem}
\theoremstyle{plain}
\newtheorem{definition}{Definition}
\theoremstyle{definition}
\newtheorem{example}{Example}
\fi

\newcommand{\Hide}[1]{}

\newcommand{\ColorEmph}[1]{{\color{blue} #1}}

\newcommand{\Todo}[1]{\ifFinal\else{\em \color{red} ++#1}\fi}
\newcommand{\NEW}[1]{\ifFinal#1\else{\color{brown} #1}\fi}
\newcommand{\TodoTK}[1]{\Todo{\textbf{Tomas}: #1}}
\newcommand{\TodoFL}[1]{\Todo{\textbf{FL}: #1}}
\newcommand{\gri}[1]{\Todo{\textbf{Grigory}: #1}}
\newcommand{\ParagraphTitle}[1]{#1.}
\newcommand{\Paragraph}[1]{%
\ifSpringer\subsubsection*{\ParagraphTitle{#1}}
\else\paragraph*{\ParagraphTitle{#1}}
\fi
}
\newcommand{\SubParagraph}[1]{%
\ifSpringer\paragraph*{\ParagraphTitle{#1}}
\else\subparagraph*{\ParagraphTitle{#1}}
\fi
}

\newcommand*{\Concat}{\mathbin\Vert}

\newcommand*{\Unsat}{\texttt{unsat}}
\newcommand*{\Sat}{\texttt{sat}}

\newcommand*{\AlgVar}[1]{\ensuremath{\mathit{#1}}}

\SetArgSty{upshape}
\SetFuncSty{textsc}
\SetFuncArgSty{upshape}

\SetKwInput{Global}{Global}

\SetKw{Const}{const }
\SetKw{Break}{break}
\SetKw{Continue}{continue}
\SetKw{Let}{let}
\SetKw{In}{in}
\SetKw{Assert}{assert}

\SetKwProg{Function}{function}{}{}
\SetKwBlock{Loop}{loop}{end}

\SetKwComment{CommentBase}{$\triangleright$\ }{}
\SetCommentSty{text}

\NewDocumentCommand{\Comment}{s o m}{%
  \IfBooleanTF{#1}{%
    \hspace{0pt plus 1filll}%
    \IfValueTF{#2}{\CommentBase*[#2]{#3}\hspace{-\skiptext}}{\CommentBase*[h]{#3}\hspace{-\skiptext}\;}%
  }{%
    \IfValueTF{#2}{\CommentBase[#2]{#3}}{\CommentBase{#3}}%
  }%
}

\newcommand*{\QED}{\ifSpringer\hfill $\square$\fi}

\newcommand*{\Relu}{\ensuremath{\mathit{ReLU}}}

\newcommand*{\Expl}[1][]{\ensuremath{\varphi_{#1}}}

\newcommand*{\Classifier}{\ensuremath{\mathcal{M}}}
\newcommand*{\ClassifierTuple}{\ensuremath{(\FSet, \FSpace, \ClSet, \ClFun)}}

\newcommand*{\FSet}{\ensuremath{\mathcal{F}}}
\newcommand*{\ClSet}{\ensuremath{\mathcal{K}}}
\newcommand*{\ClFun}{\ensuremath{\kappa}}

\newcommand*{\VertexSet}{\ensuremath{\mathcal{V}}}
\newcommand*{\EdgeSet}{\ensuremath{\mathcal{E}}}

\newcommand*{\Hidden}{\ensuremath{\mathcal{H}}}
\newcommand*{\HiddenNSet}{\ensuremath{\VertexSet_{\Hidden}}}

\newcommand*{\Domain}[1][]{\ensuremath{\mathcal{D}_{#1}}}

\newcommand*{\FSpace}{\ensuremath{\mathbb{F}}}

\newcommand*{\ImpSpace}[1][\Expl]{\ensuremath{\FSpace{}_{#1}}}

\newcommand*{\FVec}{\ensuremath{\mathbf{x}}}

\NewDocumentCommand{\act}{O{\ell} O{i}}{%
  \ensuremath{a_{#2}^{(#1)}}%
}
\NewDocumentCommand{\zed}{O{\ell} O{i}}{%
  \ensuremath{z_{#2}^{(#1)}}%
}
\NewDocumentCommand{\neuron}{O{\ell} O{i}}{%
  \ensuremath{v_{#2}^{(#1)}}%
}

\newcommand*{\ClassifierPhi}{\ensuremath{\psi_{\Classifier}}}
\newcommand*{\DomainPhi}{\ensuremath{\psi_{\Domain}}}
\newcommand*{\ClPhi}[1][c]{\ensuremath{\psi_{#1}}}

\newcommand*{\Sample}{\ensuremath{\mathbf{s}}}
\newcommand*{\SampleExpl}{\Expl[\Sample]}

\newcommand*{\Itp}{\ensuremath{\mathit{Itp}}}

\newcommand*{\ItpType}[1]{\ensuremath{\Itp{}_\mathit{#1}}}

\newcommand*{\ItpF}{\ItpType{F}}
\newcommand*{\ItpDF}{\ItpType{DF}}
\newcommand*{\ItpFactor}{\ItpType{f}}

\newcommand*{\aItpAlg}[1]{\texttt{#1}}
\newcommand*{\astronger}{\aItpAlg{stronger}}
\newcommand*{\astrong}{\aItpAlg{strong}}
\newcommand*{\amid}{\aItpAlg{mid}}
\newcommand*{\aweak}{\aItpAlg{weak}}
\newcommand*{\aweaker}{\aItpAlg{weaker}}

\newcommand*{\Logic}[1]{\textsf{#1}}
\newcommand*{\QFLRA}{\Logic{QF\_LRA}}
\newcommand*{\QFFP}{\Logic{QF\_FP}}

\newcommand*{\Formula}{\ensuremath{\phi}}
\newcommand*{\GuidePhi}[1][]{\ensuremath{\gamma_{#1}}}
\newcommand*{\GuideVar}[1][]{\ensuremath{g_{#1}}}
\newcommand*{\GuideN}{\ensuremath{n_{\GuideVar}}}

\newcommand*{\Tool}[1]{\textsc{#1}}
\newcommand{\Opensmt}{\Tool{OpenSMT}}
\newcommand{\Zthree}{\Tool{Z3}}
\newcommand{\Mathsat}{\Tool{MathSAT5}}

\newcommand{\Verix}{\Tool{VeriX}}
\newcommand{\Spexplain}{\Tool{S\textsubscript{p}EX\textsubscript{pl}AI\textsubscript{n}}}

\newcommand*{\IntConf}{\ensuremath{\mathcal{R}_{\Hidden}}}
\newcommand*{\GuideConf}{\ensuremath{\mathcal{G}_{\Hidden}}}
\newcommand*{\FixConf}{\ensuremath{\mathcal{F}_{\Hidden}}}
\newcommand*{\ConfigPhi}{\ensuremath{\Expl[\mathcal{C}]}}

\newcommand{\partwo}{PAR-2}

\newcommand{\naive}{\emph{Direct}}
\newcommand{\fixall}{\emph{Fix-all}}
\newcommand{\guideall}{\emph{Guide-all}}
\newcommand{\hybrid}{\emph{Hybrid}}

\date{}
\title{
    Neuron Activation-based
    Computation of
    Logical Explanations
    for Deep Neural Networks%
}

\author{\ifBlind\else%
Tomáš Kolárik\inst{1}\orcidlink{0000-0002-7207-5197},
Faezeh Labbaf\inst{1}\orcidlink{0000-0002-8812-6702},
Fabrizio Leopardi\inst{1}\orcidlink{0009-0004-8370-3058},
\ifSpringer\else\\\fi
Grigory Fedyukovich\inst{1,3}\orcidlink{0000-0003-1727-4043},
Michael Wand\inst{2}\orcidlink{0000-0003-0966-7824},
Natasha Sharygina\inst{1}\orcidlink{0000-0002-8872-4913}%
\ifSpringer
\else
\\[1em]
\small
\inst{1} University of Lugano (USI), Lugano, Switzerland
\\
\small
\inst{2} SUPSI, IDSIA, Lugano, Switzerland
\\
\small
\inst{3} Florida State University, United States of America
\fi
\fi}

\ifSpringer
\authorrunning{\ifBlind\else%
Kolárik T., Labbaf F., Leopardi F., Fedyukovich G., Wand M., Sharygina N.
\fi}

\institute{\ifBlind\else%
University of Lugano (USI), Lugano, Switzerland
\and
SUPSI, IDSIA, Lugano, Switzerland
\and
Florida State University, United States of America%
\fi}
\fi 

\newcommand{\Acknowledgement}{%
\SubParagraph{Acknowledgements}
\ifSpringer
\else
This is an extended version of the paper published at ATVA'26.
\fi
This work was conducted as part of the
``Formal Reasoning on Neural Networks'' project
funded by the Hasler Foundation, Switzerland.
}

\begin{document}

\maketitle

\begin{abstract}
    Formal explainability of classifying neural networks (NNs) is an active area of research,
    providing explanations with provable guarantees of the classification
    within continuous regions of the input feature space.
    However,
    the existing techniques are either limited to individual input features
    without guarantees on their relations
    or
    the provided solutions fail to scale to deep architectures.
%
    This paper addresses these issues
    by introducing a flexible symbolic framework
    for an efficient, guided computation of explanations of the NN behavior,
    parametrized by the activations of internal neurons,
    and
    using logical engines such as SMT solvers.
%
    Unlike prior methods that rely on specialized NN verifiers, our method yields explanations that are not restricted in shape.
    Our algorithm is implementable on top of a general-purpose logical solver,
    isolating the NN-specific encoding from the algorithmic framework.
%
    We experimented with a wide range of benchmarks
    from the domains of image recognition and medicine,
    illustrating the advantages of the new method,
    particularly in computational efficiency.
    Notably,
    our approach enables
    logical explanation
    of deep networks
    not amenable to prior logic-based methods.

\end{abstract}

\section{Introduction}

A recent trend in explaining
the behavior of classifying neural networks (NNs)
is based on formal methods and first-order logic:
Given a NN
and
a sample point,
that is,
an assignment of all input features of the NN
to concrete values,
the problem is to
compute a logical justification
in the form of a sufficient condition
for the NN classification.
Logic-based methods
encode the NN behavior and the classification outcome
into a logical formula
and then reason over it
while generalizing the constraints on the sample point.
Although these methods
deliver formal explanations accompanied by \emph{provable guarantees}
regarding their correctness
and
can
precisely capture the semantic behavior of the NN
within the vicinity of the sample point,
their \emph{computation} is a significant \emph{bottleneck}%
~\cite{LabbafKBFWS:25,Wu_Li_Wu_Barrett_2026,wu2023verixverifiedexplainabilitydeep,ignatiev2019abduction,Shih_IJCAI18_SymbolicApproach,Seshia_TowardsVerifiedAI,Marques-Silva2023,LaMalfa_IJCAI21_GuaranteedOptimalRobustExplanations,Ignatiev:24,bassan2025explaining},
not amenable to realistic deep architectures.
Early logic-based methods%
~\cite{Wu_Li_Wu_Barrett_2026,wu2023verixverifiedexplainabilitydeep,ignatiev2019abduction,Shih_IJCAI18_SymbolicApproach,Seshia_TowardsVerifiedAI,Marques-Silva2023,LaMalfa_IJCAI21_GuaranteedOptimalRobustExplanations,Ignatiev:24,bassan2025explaining}
rely on symbolic engines
that are based on constraint or SMT (Satisfiability Modulo Theories~\cite{DeMoura_SMTIntroduction,Nieuwenhuis:2006}) solving,
like
SMT solvers~\cite{cvc5,z3,opensmt,mathsat5,yices2,smtinterpol}
or
specialized
NN verifiers
such as
Marabou~\cite{Katz_CAV19_MarabouFramework}
to derive explanations via \emph{abductive reasoning},
which originates from program verification~\cite{DilligDLM13,Albarghouthi2016,Prabhu:Abduction1,Prabhu:Abduction2}.
These methods
iteratively relax input constraints by enumerating the input features.
This process
is highly sensitive to feature count,
requiring a series of computationally expensive queries to a black-box solver,
and
depends heavily on selecting
a specific feature order.
\NEW{%
The approach in~\cite{bassan2025explaining} targets the query overhead
by accelerating individual verification calls
by abstracting and iteratively refining the original NN,
but the sensitivity to feature count remains a bottleneck.
}%
In addition to the computational complexity of solving,
often similar to an unsatisfiable core extraction,
the issue is that
the resulting explanations exhibit hyperrectangular shapes
that stem from the formula structure
and
fail to capture input feature relations
or to scale beyond specific sample points.
Furthermore,
the encoding and reasoning over the NN are often coupled:
while the symbolic solvers are heavily optimized tools,
they could be constrained by such an approach.
%
To address these problems,
a new method~\cite{LabbafKBFWS:25,KolarikFLLSW:26} for computing generic logical explanations
that isolates the NN-specific encoding from the reasoning algorithms
was recently proposed.
The explanations are derived from proofs of unsatisfiability,
preserving complex relationships between input features,
and utilizing the full power of symbolic solvers
without any NN-specific alterations to them.
For example,
\cite{LabbafKBFWS:25}
utilizes Craig interpolation~\cite{Cra57}
and unsatisfiable core extraction
to
extract formal justifications
for classification,
removing the need for multiple calls to the underlying solver.
While these explanations are correct by construction, not restricted in shape, and offer greater generality than antecedent methods,
the approach still scales poorly.
Similarly to other prior methods,
the problem stems from the direct encoding of the NN
without sharing any information about its structure
or how it evaluates at the sample point.

To overcome the scalability limitations of all prior logical explanation approaches,
we introduce a mechanism that systematically focuses the computation
on the sample point
while preserving or even improving explanation generality in practice.
Our solution decouples and advances both the encoding and solving phases.
At the encoding level,
we introduce
    a comprehensive
    \emph{parametrizable NN encoding}
    that is driven by the activation of internal neurons
    at the given sample point,
    and by an input configuration specific to the NN
    that is either user-provided
    or automatically generated
    by external NN analysis algorithms.
    The configuration selects
    internal neurons that are firmly fixed to only one activation phase
    and neurons
    where the engine is guided to favor one activation phase over the other.
    This provides a high flexibility in the trade-off
    between the computational complexity
    and the potential size of the space associated with the resulting explanations,
    while
    not restricting their shape
    and
    remaining provably correct.
    We also provide an efficient choice of the NN configuration
    by
    \emph{identifying important neurons}
    that maintains a balance
    in the selection
    between fixing and guiding
    the internal nodes
    based on a gradient traversal
    from the sample point towards a different class,
    observing the difference between the neuron activations.

    At the solver level,
    we introduce
    a novel transferable mechanism to optimize the performance
    of general-purpose logical solvers such as SAT or SMT for a concrete application such as NN analysis,
    by involving
    \emph{guiding constraints}.
    Unlike traditional approaches that rely on rigid,
    general-purpose heuristics,
    our method uses guiding variables to encode preferred properties
    into the logical formula,
    allowing solvers to prioritize decisions systematically
    aligned with the model's domain-specific structure.
    The technique can be applied wherever some constraints are more likely to hold than others
    and in any logical solver handling non-determinism.
    By leveraging also the order of decision suggestions
    derived from the NN architecture,
    the search space is pruned more efficiently
    than by existing abductive reasoning tools.
    This contribution bridges the gap
    between theoretical correctness and practical applicability,
    offering a scalable solution for generating rigorous logical explanations
    in deep learning contexts.


Overall, our novel neuron activation-based framework for automated computation of logical explanations
is agnostic to the backend solver,
as well as
to the techniques
that generalize the constraints on the sample point,
such as Craig interpolation
(where various algorithms are available)~\cite{LabbafKBFWS:25},
unsatisfiable core extraction~\cite{Wu_Li_Wu_Barrett_2026,bassan2025explaining,wu2023verixverifiedexplainabilitydeep,LabbafKBFWS:25},
or
interval expansion~\cite{Ignatiev:24,LabbafKBFWS:25}.
\NEW{%
This enables fine-tuning the trade-off between the generality and interpretability
of logical explanations---%
properties discussed in~\cite{KolarikFLLSW:26}.
}%
In particular,
when instantiated with Craig interpolation,
the resulting explanations are not restricted in shape,
unlike those generated by prior methods relying on specialized NN verifiers.
\NEW{%
Furthermore,
the approach generalizes to any acyclic network architecture
(e.g., convolutional neural networks---CNNs, ResNets)
and activation function,
handling arbitrary piecewise-linear functions directly (by guiding or splitting disjunctions)
and linearly overapproximating non-piecewise-linear ones~\cite{zhang2018efficient}.
}%
Although beyond the scope of this paper,
the computational framework can easily
integrate important external components
(e.g., \cite{LabbafKBFWS:25,Leopardi26})
to provide various interpretations (e.g., visual) of the logical explanations---%
an active research field
driven by diverse applications
such as image recognition and medicine.
\NEW{%
Notably,
while the resulting explanations remain large and complex
(sharing the complexity of those in~\cite{LabbafKBFWS:25}),
such interpretability methods render useful visualizations
in a matter of seconds.
}

We implemented the guided approach
using the \Opensmt{}~\cite{opensmt2,opensmt} solver
and
experimented with a wide range of real-world benchmarks
across a spectrum of complexities
from the medical and image-recognition domains,
demonstrating the advantages of the new method---%
particularly regarding its computational efficiency.
Notably,
our approach enables the explainability of deep networks
not amenable to prior logic-based techniques.
Compared to the leading abduction-based explainability methods for NNs,
namely
the case study~\cite{LabbafKBFWS:25},
and
the tool \Verix{}\footnote{%
    The recently published tool \Verix{}+~\cite{Wu_Li_Wu_Barrett_2026}
    improves the sensitivity of \Verix{} to the number of input features
    but still lacks evaluation on deep architectures.
}~\cite{wu2023verixverifiedexplainabilitydeep},
which
delivers
sound and more compact explanations
than other non-logic-based state-of-the-art tools
such as Anchors~\cite{Ribeiro_AAAI18_Anchors} and LIME~\cite{Ribeiro_2016}
on the image classification datasets MNIST~\cite{mnist}
and GTSRB~\cite{Stallkamp-IJCNN-2011},
our experiments
show that
we not only provide yet more general explanations,
but also substantially outperform
both~\cite{LabbafKBFWS:25,wu2023verixverifiedexplainabilitydeep}
in these datasets,
especially on deep architectures.
The comparative experimentation
illustrates the improvement of the generality of the explanations
by applying interpretation techniques
from prior works~\cite{LabbafKBFWS:25,Leopardi26}.

As a secondary contribution,
we experiment with applying the parametrized NN encoding to an alternative modular framework centered on inductive invariant synthesis.
By modeling the network as a discrete-time transition system rather than a monolithic function,
we facilitate the discovery of predicates describing relationships among neurons in each layer.
Our proof-of-concept experiments demonstrate that leveraging internal neuron configurations significantly enhances formal NN analysis,
even when decoupled from traditional monolithic SMT encodings.
Overall, the performance improved drastically: when fixing neurons, benchmarks were solved even when they were timing out without the fixing.

\Paragraph{Related Work}

Logic-based methods formulate the explainability of NNs
using abductive reasoning~\cite{KolarikFLLSW:26}.
There is a multitude of such methods%
~\cite{ignatiev2019abduction,Shih_IJCAI18_SymbolicApproach,Seshia_TowardsVerifiedAI,Marques-Silva2023,LaMalfa_IJCAI21_GuaranteedOptimalRobustExplanations,bassan2025explaining},
culminating in highly performant
\Verix{} and \Verix{}+\cite{Wu_Li_Wu_Barrett_2026} frameworks.
As such, all these techniques
are based on specialized NN verifiers
and
focus only
on special cases of the logical concept of abduction,
restricting to a specific formula structure
or a specific neighborhood of a sample point.
As a consequence,
they are also sensitive to the number of input features.
Similarly,
the work in \cite{Ignatiev:24}
provides interval-based explanations,
yet still constrains individual features
that cannot reflect feature relationships
and hence approximate complex decision boundaries.
Overall,
these approaches suffer from not being general
due to the choice of tools, techniques, and the type of explanations,
and
the resulting limitations disallow the use of the newly introduced concept of fixing neuron activations
that requires the support of unrestricted explanations.
The recent case study~\cite{LabbafKBFWS:25}
experimented with using Craig interpolation techniques
and unsatisfiable cores
to relax input constraints
to guarantee the generalization
of the explanation.
Notably,
only one satisfiability query suffices for this purpose.
However,
the method fails to scale to multi-layer architectures
due to direct
encoding of the NN.
The approach in this paper,
while inspired by the generalization ideas of~\cite{LabbafKBFWS:25},
not only elevates
the computation of unrestricted logical explanations
to a generic symbolic overapproximating reasoning,
but more importantly,
addresses the scalability issues
by
parametrizing the explainability task
with the NN architecture.

Alongside explainability,
NN verification (VNN)
is an active field
focused on whether a network satisfies a given property
(e.g., local robustness).
Solving such general satisfiability queries
often requires exhaustive exploration of the feature space.
The related problems, solutions, and tools are thoroughly surveyed in~\cite{Kanav2025}.
Some VNN tools serve
as internal engines for early logic-based explainability methods,
but their iterative black-box use leads to significant complexity blow-ups.
Such tools could be integrated into our neuron activation-based framework
if extended to support guiding constraints,
but
they typically lack generalization techniques
(e.g., Craig interpolation),
which are crucial for explainability.

Classic (non-logic-based) approaches to NN explainability either perform
analysis at the \emph{unit (neuron) level}~\cite{Erhan_VisualizingDNNFeatures,Zeiler_VisualizingConvNets}, which however fails to capture global properties or interdependencies between units;
are \emph{gradient-based} methods~\cite{Simonyan_ICLR14_DeepInsideConvNets,Bach_LayerwiseRelevancePropagation}, which describe the network behavior around a particular sample;
or \emph{inversion} methods~\cite{Mahendran_CVPR2015_UnderstandingDeepImageRepresentations},
which provide global, but only approximate, views.
There is also a range of \emph{model-agnostic} methods~\cite{Baehrens_ExplainIndividualDecisions,Lundberg_NIPS17_InterpretModelPredictions},
including SHAP~\cite{10.5555/3722577.3722589}, LIME~\cite{Ribeiro_2016},
and Anchors~\cite{Ribeiro_AAAI18_Anchors},
but they often yield logically inconsistent explanations~\cite{Marques-Silva2023}.
All the classic methods above
rely on approximations with \emph{no strict guarantees}
or are \emph{sample-based}:
the behavior outside the sampled distribution remains unknown.

Post-processing of logical explanations
is an important step
for providing a human-understandable interpretation,
such as a visualization or formula simplification.
There is a whole body of work on interpretation techniques,
which is outside the scope of this paper~\cite{LabbafKBFWS:25,Leopardi26,Wu_Li_Wu_Barrett_2026,wu2023verixverifiedexplainabilitydeep,10.5555/3722577.3722589,Ribeiro_2016,Ribeiro_AAAI18_Anchors,Simonyan_ICLR14_DeepInsideConvNets,bassan2025explaining}.
The general nature of logical explanations
used in this paper
makes them
amenable to such techniques.
In our experiments,
we used
techniques
such as
the visualization of the feature space~\cite{LabbafKBFWS:25},
minimum norm point extraction,
and hypervolume estimation~\cite{Leopardi26}
for comparative analysis.

\ifSpringer
\else
\Acknowledgement{}
\fi

\section{Preliminaries}
\label{sec:background}

\Paragraph{Neural Network Classifiers}
\label{sec:background:classification}
A finite set of input \emph{features} is $\FSet = \{1, \dots, m\}$,
each associated with a discrete or continuous \emph{domain}
$\Domain[i] \subset \mathbb{R}$,
either a discrete set of real numbers
or a closed interval.
A \emph{classifier} is $\Classifier = \ClassifierTuple$,
where
$\FSpace = \Domain[1] \times \Domain[2] \times \dots \times \Domain[m]$
is the input \emph{feature space},
$\ClSet = \{c_1, c_2, \dots, c_K\}$
a finite set of $K$ classes,
$K \geq 2$,
and
$\ClFun : \FSpace \rightarrow \ClSet$
is a \textit{classification function}.
An arbitrary point in the feature space is
\FVec{} =  $(x_1, \dots, x_m) \in \FSpace$,
and each $x_i \in \Domain[i]$ is an input \emph{feature variable}.
A specific \emph{sample point} in the feature space
is \Sample{} = $(s_1,\dots,s_m) \in \FSpace$,
where each $s_i \in \Domain[i]$ is a constant.
For all classes,
at least one sample point must be classified into the class.
%
%
A (fully connected) \emph{classifying neural network} (NN) is a classifier
represented
by a graph $(\VertexSet,\EdgeSet)$ with
the set of vertices~\VertexSet{}
and
the set of edges~\EdgeSet{},
representing the neurons and their weighted connections, respectively.
\NEW{%
The approach in this paper also extends to other acyclic architectures (e.g., CNNs).
}%
Neurons~\neuron{} are partitioned into~$L+1$ layers $\VertexSet^{(0)},\dots,\VertexSet^{(L)}$
such that
each $\VertexSet^{(\ell)} = \{ \neuron \}_{i=1}^{n^{(\ell)}}$
contains~$n^{(\ell)}$ neurons.
Layer~$\VertexSet^{(0)}$ is the \emph{input} layer,
$\VertexSet^{(L)}$ is the \emph{output} layer,
layers $\VertexSet^{(1)},\dots,\VertexSet^{(L-1)}$ are \emph{hidden} layers,
and
$\HiddenNSet = \bigcup_{\ell=1}^{L-1} \VertexSet^{(\ell)}$
is the set of neurons in hidden layers.
For each $0 \leq \ell < L$,
every neuron $\neuron \in \VertexSet^{(\ell)}$ is connected to every neuron $\neuron[\ell+1][j] \in \VertexSet^{(\ell+1)}$ by an edge $(\neuron, \neuron[\ell+1][j]) \in \EdgeSet$,
which is assigned a weight $w_{i,j}^{(\ell)} \in \mathbb{R}$,
and for each $1 \leq \ell \leq L$,
every neuron \neuron{} is assigned a bias $b_i^{(\ell)} \in \mathbb{R}$.
Each input neuron~$\neuron[0]$
is associated with input feature variable~$x_i \in \Domain[i]$
and
each output neuron~$\neuron[L]$ to class $c_i \in \ClSet$.
%
%
%
Given a neuron \neuron,
$\zed : \FSpace \to \mathbb{R}$
is the \emph{pre-activation}
and
$\act : \FSpace \to \mathbb{R}$
is the \emph{activation},
defined as follows:
$\zed[0](\FVec{}) = \act[0](\FVec{}) = x_i$,
and
for hidden and output layers ($1 \leq \ell \leq L$):
\begin{align}
\label{eq:nn:zed}
    \zed(\FVec{}) &= \sum_{j=1}^{n^{(\ell-1)}} w_{j,i}^{(\ell-1)}\act[\ell-1][j](\FVec{}) + b_i^{(\ell)} \\
\label{eq:nn:act}
    \act(\FVec{}) &=  \sigma\bigl(\zed(\FVec{})\bigr)
\end{align}
where
$\sigma(z) = \Relu(z) = \max\{z,0\}$
in the case of hidden layers
and
$\sigma(z) = z$
in the case of output layers.
A neuron $\neuron \in \HiddenNSet$ is \emph{active} w.r.t. input $\FVec \in \FSpace$
if
$\zed(\FVec) > 0$,
and \emph{inactive} if $\zed(\FVec) \leq 0$.
\NEW{%
This can be extended to arbitrary piecewise-linear~$\sigma$
or an overapproximation~\cite{zhang2018efficient}.
}%
%
%
The classification is given by the output neuron with the maximum activation:
\(
\ClFun(\FVec) = c_i \Longleftrightarrow i = \argmax_{j}\{\act[L][j]\}.
\)

\Paragraph{SMT Encoding of Neural Networks}
\label{sec:encoding}
Using
an SMT-LIB~\cite{BarFT-RR-17} quantifier-free first-order logic,
given a class $c \in \ClSet$,
$\ClFun(\FVec) \neq c$ is encoded by
$\psi := \ClassifierPhi \land \DomainPhi \land \neg \ClPhi$:
\begin{itemize}
    \item[\ClassifierPhi:]
    The neural network~\Classifier{}:
    the constraints
    in~(\ref{eq:nn:zed}),
    with fixed weights and biases,
    and
    in~(\ref{eq:nn:act}),
    are encoded directly.
    The \Relu{} function can be encoded
    using the if-then-else (ITE) term of the SMT-LIB standard.
    \item[\DomainPhi:]
    All domains~\Domain[i] of the feature space,
    that is,
    $\FVec \in \FSpace$
    (i.e., $x_i \in \Domain[i]$ for all~$i \in \FSet$).
    \item[$\neg \ClPhi$:]
    The constraint on the outcome of the classification
    is \emph{not} class~$c$:
    the value of at least one output neuron~$\act[L](\FVec)$
    exceeds that of class $c$.
\end{itemize}
Functions \zed{} and \act{} are encoded via inlining
or auxiliary variables.
Since $K \geq 2$ and classes are not redundant,
$\psi$ is satisfiable.
Given a sample point $\Sample \in \FSpace$
s.t. $\ClFun(\Sample) = c$,
and
using
\(
    \SampleExpl := \bigwedge_{i \in \FSet}\ x_i = s_i
\),
formula $\SampleExpl \land \psi$
is \emph{unsatisfiable}.
We instantiate the SMT-LIB logic to \QFLRA{} (Quantifier-Free Linear Real Arithmetic).
\ifSpringer
\else
More details on the encoding follow in Appendix~\ref{apx:example}.
\fi
Alternative encodings like \QFFP{} use floating-point semantics,
but sound solving and interpolation are much more challenging
in such settings.

\Paragraph{Logical Explanations}
Our solution aims at general logic-based explanations.
Thus,
we define a general logical concept
using the notion of abductive reasoning\footnote{
    For vectors of variables $\mathbf{x}, \mathbf{y}$
    and
    formulas $T(\mathbf{x}, \mathbf{y}), O(\mathbf{x}, \mathbf{y})$,
    the \emph{abduction problem} is to find a formula $\varphi(\mathbf{x})$
    s.t.:
    \begin{enumerate*}[label=(\arabic*)]
        \item $\varphi(\mathbf{x}) \land T(\mathbf{x}, \mathbf{y}) \nRightarrow \bot$,
        and
        \item $\varphi(\mathbf{x}) \land T(\mathbf{x}, \mathbf{y}) \Rightarrow O(\mathbf{x}, \mathbf{y})$.
    \end{enumerate*}
}
that subsumes all prior abduction-based concepts%
~\cite{Ignatiev:24,ignatiev2019abduction,Shih_IJCAI18_SymbolicApproach,Seshia_TowardsVerifiedAI,Marques-Silva2023,Wu_Li_Wu_Barrett_2026,wu2023verixverifiedexplainabilitydeep,LaMalfa_IJCAI21_GuaranteedOptimalRobustExplanations,bassan2025explaining}\footnote{
    Our generalization is influenced by the recent case study~\cite{LabbafKBFWS:25} on the application of Craig interpolation for the explainability of classifying NNs.
    Theoretical comparison of~\cite{LabbafKBFWS:25} with other abduction-based concepts is available in~\cite{KolarikFLLSW:26}.
}.

\begin{definition}[Logical Explanation, Impact Space]
\label{def:spaceex}
    Given a classifier~$\Classifier = \ClassifierTuple$,
    and a class~$c \in \ClSet$,
    a \emph{logical explanation} of class~$c$
    is a satisfiable formula~\Expl{}
    such that
    $\forall\, \FVec \in \FSpace \mathbin{.} \Expl(\FVec) \implies \ClFun(\FVec) = c$.
    The \emph{impact space} of~\Expl{}
    is the set
    $\ImpSpace := \{ \FVec \in \FSpace \mid \Expl(\FVec) \} \subset \FSpace$.
\end{definition}
Logical explanations with the smallest impact space
correspond to a sample point $\Sample{} \in \FSpace$:
\SampleExpl{}
with
$\ImpSpace[\SampleExpl] = \{ \Sample \}$.
Logical explanations are sufficient conditions of classification to class~$c$:
they guarantee the classification for all points covered by the impact space.
Unlike many prior methods,
the concept
can be applied
both to global and local\footnote{
    Given a sample~\Sample{}
    (can also be randomly selected),
    \Expl{} s.t.
    $\Sample \in \ImpSpace$
    is a local explanation.
}
explanations
since it does not rely on a sample point.
The general logical representation of explanations
is inherently flexible,
making it applicable to a wide range of interpretation techniques---%
an active research area on its own
(e.g., \cite{Leopardi26,LabbafKBFWS:25,wu2023verixverifiedexplainabilitydeep})
driven by diverse, application-specific user needs.

Various symbolic techniques for the construction
of generalized logical explanations
can be employed,
such as
unsatisfiable core extraction,
interval expansion,
or
Craig interpolation\footnote{%
    Given an unsatisfiable formula $A \land B$, a \emph{Craig interpolant}\Hide{~\cite{Cra57}} is a formula $I$
    s.t.:
    $A \Rightarrow I$,
    $I \land B$ is unsatisfiable,
    and
    $I$ uses only the common variables of $A$ and $B$.
    $I = \Itp{}(A, B)$ denotes
    the interpolant~$I$ computed by an~interpolation algorithm $\Itp$
    from~$A$ and~$B$.
}.
A Craig interpolant $I = \Itp(\Expl, \psi)$,
where \Expl{} is a logical explanation and $\psi$ encodes $\ClFun(\FVec) \neq c$,
is a logical explanation
of~$c$ satisfying $\Expl \Rightarrow I$ (hence $\ImpSpace \subseteq \ImpSpace[I]$).
Various interpolation algorithms
exhibit different logical strengths of the produced formulas~\cite{Blicha19}:
Farkas' lemma ($\ItpF$),
its logically stronger variant ($\ItpDF$),
their dual versions ($\ItpF', \ItpDF'$),
and an algorithm with a flexible strength
($\ItpFactor$)~\cite{Alt17}
parametrized by a rational factor~$f \in [0,1]$.
The explanations generated by these algorithms,
when applied to the same arguments,
follow
$\ItpDF \Rightarrow \ItpF \Rightarrow \ItpFactor \Rightarrow \ItpF' \Rightarrow \ItpDF'$.

The next sections
describe the contributions of this paper,
which enable efficient computation of generic logical explanations.
The generalization is not restricted to any particular method
and is suitable for a wide range of solvers.
Sect.~\ref{sec:activation-based}
describes particular techniques
for domain-specific guidance within a parametrized encoding of NNs,
and
Sect.~\ref{sec:guide}
addresses a generic but powerful idea of guiding constraints
that aim at general-purpose logical solvers.

\begin{figure}[t!]
    \centering
    \hspace*{-1.8cm}%
    \includegraphics[trim={1cm 0.0cm 0.75cm 0.0cm},clip,width=0.95\linewidth]{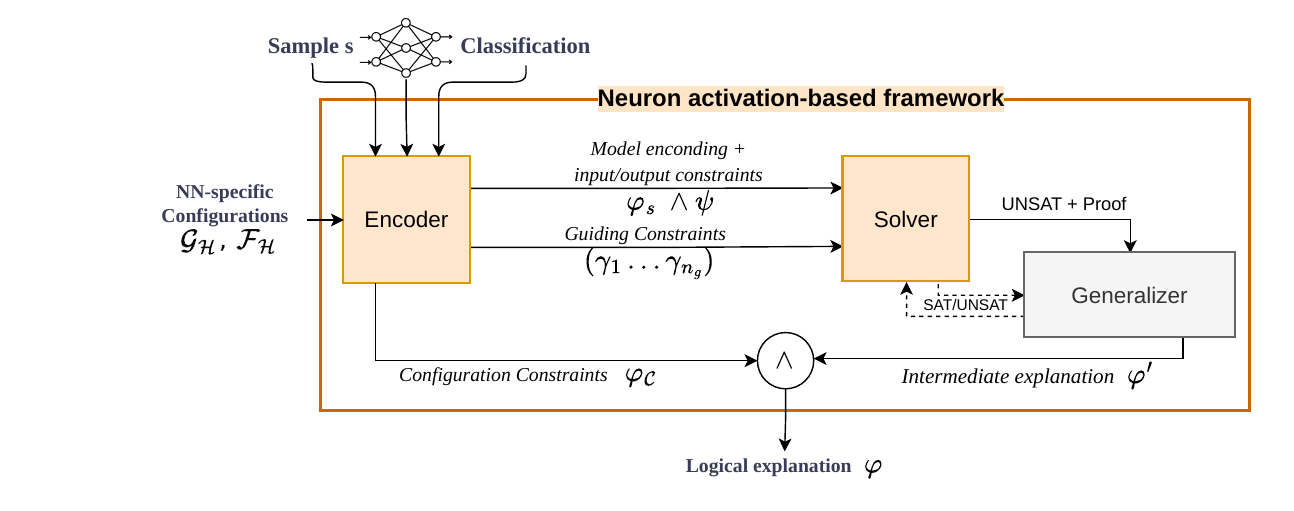}
    \caption{Neuron activation-based framework for automated computation of logical explanations,
    which takes as input a sample point, NN, the classification, and configurations of internal neurons,
    and outputs a logical explanation.
    }
    \label{fig:methods:spexplain}
\end{figure}

\section{Neuron Activation-based Logical Explanations}
\label{sec:activation-based}

This section introduces a flexible symbolic framework
for an efficient, guided computation of logical explanations of the NN behavior,
parametrized by the activations of internal neurons,
and
using logical solvers such as SMT solvers.
In particular,
it focuses on
a comprehensive parametrizable encoder
of the NN classifier---%
namely,
the constraints on weights and biases~(\ref{eq:nn:zed})
and neuron outputs~(\ref{eq:nn:act})---%
which is driven by the activation of internal neurons
at the given sample point being explained,
and by an input configuration specific to the NN.
Notably,
as opposed to many prior approaches,
the encoder is isolated from the symbolic computational engine,
which is addressed in the next section.

Fig.~\ref{fig:methods:spexplain} gives an overview of our framework for computing logical explanations with a parametrized encoding of the input neural network,
highlighting in orange the contributions of this paper.
The \textbf{Encoder}
processes
a given neural network, input sample, and its classification,
yielding a logical formula $\SampleExpl \land \psi$
that is unsatisfiable by construction (cf. Sect.~\ref{sec:encoding}).
Notably,
the \textbf{Encoder} is parametrizable by
NN-specific configurations~\GuideConf{} and~\FixConf{}
that
define specific internal neurons whose activation should be guided (\GuideConf{}) or fixed (\FixConf{}),
optionally producing additional
constraints:
\GuideConf{} controls solely \emph{guiding constraints} $(\GuidePhi[1],\dots,\GuidePhi[\GuideN])$,
while
\FixConf{} may restrict the encoding~$\psi$
as well as produce additional \emph{configuration constraints}~\ConfigPhi{}
to maintain the soundness of the resulting explanation~\Expl{}.
This parametrized encoding helps the logical \textbf{Solver},
such as a proof-producing SMT solver,
to prove the unsatisfiability of $\SampleExpl \land \psi$ faster:
\NEW{%
one of the branches in each \Relu{} activation defined in (\ref{eq:nn:act}) within the encoding~$\psi$
may be firmly pruned (via \FixConf{})
or preferred over the other (via \GuideConf{}).
This becomes especially important
for deep NNs with many hidden layers
(i.e., with many consecutive \Relu{} case splits).
To allow the guidance of the reasoning over $\psi$ via \GuideConf{},
}%
the \textbf{Solver}
implements a new concept of \emph{decision guidance} (see Sect.~\ref{sec:guide}),
given the additional ordered input $(\GuidePhi[1],\dots,\GuidePhi[\GuideN])$
generated based on~\GuideConf{},
where each constraint \GuidePhi[i]
(here an activation of an internal neuron)
is favored to be satisfied rather than not.
Then,
the \textbf{Solver}
extracts
a proof of the unsatisfiability
of $\SampleExpl \land \psi$,
which is subsequently applied to
the \textbf{Generalizer}
that generalizes
formula~\SampleExpl{}
into an intermediate explanation~$\Expl'$.
Combined with the logical encoding of NN configuration constraints~\ConfigPhi{},
the final explanation
results in
$\Expl = \Expl' \land \ConfigPhi$.
The \textbf{Generalizer}
can be \Hide{straightforwardly }implemented in various ways as long as the overapproximation of formula~\SampleExpl{} is computed and
thus directly produces a~$\Expl'$.
For example,
Craig interpolation procedures can be used to extract~$\Expl'$ as a computed interpolant.
Another option is extracting an unsatisfiable core---%
removing parts of $\SampleExpl$ while maintaining its unsatisfiability---%
but requiring additional calls (SAT/UNSAT) to the \textbf{Solver}.
The resulting logical explanations are amenable to further interpretations
(e.g., \cite{LabbafKBFWS:25,Leopardi26}).

The rest of the section describes how the \textbf{Encoder}
forms its outputs
(\textbf{Solver} inputs,
cf. Sect.~\ref{sec:guide}),
given NN configurations~\GuideConf{} and~\FixConf{}:
model encoding~\ClassifierPhi{}
(a part of~$\psi$),
guiding constraints $(\GuidePhi[1],\dots,\GuidePhi[\GuideN])$,
and configuration constraints~\ConfigPhi{}.

\Paragraph{Decision Guidance via Neuron Activations}
\label{sec:guide:nn}

\newcommand*{\GuidePhis}{\ensuremath{\mathbf{\Gamma}}}

\begin{algorithm}[t!]
    \caption{Generating neuron activation-based guiding constraints}
    \label{Alg:Encode:Guide}
    \DontPrintSemicolon
    \KwIn{$\Classifier = \ClassifierTuple$, sample~point~$\Sample \in \FSpace$,
    set~of~neurons~to~guide~$\GuideConf \subseteq \HiddenNSet$}
    \KwOut{ordered set of \GuideN{} guiding constraints $(\GuidePhi[1], \GuidePhi[2], \dots, \GuidePhi[\GuideN])$,
    $\GuideN = |\GuideConf|$}
    \BlankLine

    \Comment{following Sect.~\ref{sec:background:classification},
    each \zed{} refers to~(\ref{eq:nn:zed}),
    \neuron{} to a neuron,}
    \Comment{\dots
    $L$ to the number of layers,
    and
    $n^{(\ell)}$ to the number of neurons in layer~$\ell$}

    $\GuidePhis \gets ()$
    \Comment*{initialize the list of guiding constraints}

    \For(\Comment*[f]{iterate over all hidden layers}){$\ell \gets 1;\ \ell \leq L-1;\ \ell \gets \ell + 1$}{
        \For(\Comment*[f]{iterate over all neurons of layer~$\ell$}){$i \gets 1;\ i \leq n^{(\ell)};\ i \gets i + 1$}{
            \Comment{guide solver to prefer the (in)active phase for (in)active neurons from \GuideConf{}}
            \lIf{$\neuron  \in \GuideConf$}{%
                $\GuidePhis \gets \GuidePhis \Concat \big(\zed(\Sample) > 0 \big)  \mathbin{?}  \big(\zed > 0\big) \mathbin{:}  \big(\zed \leq 0\big)$
            }

        }
    }

    \Return \GuidePhis{}
\end{algorithm}

Logical solvers can handle focused, structure-aware guiding constraints
(cf. Sect.~\ref{sec:guide}),
which we exploit to support direct NN encoding
(cf. Sect.~\ref{sec:encoding})
by suggesting preferred internal neuron activations.
In Fig.~\ref{fig:methods:spexplain},
direct NN encoding corresponds to model encoding~$\psi$,
sent alongside guiding constraints
$(\GuidePhi[1],\dots,\GuidePhi[\GuideN])$
to the \textbf{Solver}.
Alg.~\ref{Alg:Encode:Guide} describes
how to encode neural network~\Classifier{} and a sample point $\Sample \in \FSpace$
into guiding constraints,
driven by a set~\GuideConf{}
(a part of NN configurations in Fig.~\ref{fig:methods:spexplain})
specifying which neurons to guide.
The choice of the set \GuideConf{} is specific to the network~\Classifier{}
and
can be user-provided
or automatically generated
by external NN analysis algorithms.
The algorithm strictly follows the order of hidden layers and neurons
(though other orders are possible).
The guiding constraints focus on the pre-activation~(\ref{eq:nn:zed}) of neurons~\zed{}
evaluated at~\Sample{},
\emph{guiding} the solver toward the desired activation~(\ref{eq:nn:act}),
either $\act = \zed$ or $\act = 0$,
within the ITE encoding.
\NEW{%
For example,
consider a NN with $\FVec{} = (x_1,x_2,x_3)$
and
a neuron $\neuron[1] \in \GuideConf$ (in the first hidden layer)
with pre-activation
$\zed[1] = 2\act[0][1] + 0\act[0][2] + \act[0][3] = 2x_1 + x_3$.
Given sample $\Sample_1 = (1,1,3)$,
Alg.~\ref{Alg:Encode:Guide} adds
guiding constraint
$\zed[1] > 0$
to~\GuidePhis{};
given sample $\Sample_2 = (0,2,0)$,
it adds $\zed[1] \leq 0$.
\ifSpringer
\else
More detailed example follows in Appendix~\ref{apx:example}.
\fi
}%
Notably,
guiding constraints affect neither satisfiability soundness
nor explanation correctness.
Moreover,
the order of neurons and of their activations
is not strictly enforced,
as the solver considers them solely during the decisions,
ignoring them during conflict analysis and propagation.
See Sect.~\ref{sec:guide} for more details.

A direct strategy for Alg.~\ref{Alg:Encode:Guide} is to set $\GuideConf = \HiddenNSet$,
guiding the decisions on all neuron activations in the network.
This is in fact usually the optimal choice
since using guiding constraints
improves both performance and generality in practice.
%

\Paragraph{NN Encoding Parametrizable by Fixing Neuron Activations}
\label{sec:fix:nn}

\newcommand*{\UnstableEps}{\ensuremath{\varepsilon}}

\begin{algorithm}[t!]
    \caption{Parametrizable NN encoding, fixing neuron activations}
    \label{Alg:Encode:Fix}
    \DontPrintSemicolon
    \KwIn{$\Classifier = \ClassifierTuple$, sample~point~$\Sample \in \FSpace$,
    set of neurons~to~fix~$\FixConf \subseteq \HiddenNSet$\Hide{, $X\!=\!Y\!=\!\varnothing$}}
    \KwOut{NN~encoding~\ClassifierPhi{}, configuration~constraints~\ConfigPhi{}}
    \BlankLine

    \Comment{notation as in Alg.~\ref{Alg:Encode:Guide};
    \act{} refers to~(\ref{eq:nn:act});
    $\act[0] = x_i$ (input feature variables)}

    $\ClassifierPhi \gets \top, \ConfigPhi \gets \top$
    \Comment*{initialize encodings of $\Classifier$ and the configuration constraints}

    \For(\Comment*[f]{iterate over all hidden layers}){$\ell \gets 1;\ \ell \leq L-1;\ \ell \gets \ell + 1$}{
        \For(\Comment*[f]{iterate over all neurons of layer~$\ell$}){$i \gets 1;\ i \leq n^{(\ell)};\ i \gets i + 1$}{
            \uIf(\Comment*[f]{do not fix neurons that are not included in \FixConf{}}){$\neuron \not \in \FixConf$}{
                \label{Alg:Encode:Fix:Cond}
                $\ClassifierPhi \gets \ClassifierPhi \land \Big( \act = \mathit{ite}\big(\zed > 0, \zed, 0\big) \Big)$
                \Comment*{direct ITE encoding}
            }
            \Else{
                $\ClassifierPhi \gets \ClassifierPhi \land \big( \zed(\Sample) > 0 \big) \mathbin{?} \big( \act = \zed\big) \mathbin{:} \big( \act = 0\big)$
                \Comment*{fix the activation}
                $\ConfigPhi \gets \ConfigPhi \land \big( \zed(\Sample) > 0 \big) \mathbin{?} \big( \zed > 0\big) \mathbin{:} \big( \zed \leq 0\big)$
                \Comment*{constrain the activation}
            }
        }
    }




    $\ClassifierPhi \gets \ClassifierPhi \land \bigwedge_{i=1}^{n^{(L)}} (\act[L] = \zed[L])$
    \Comment*{no \Relu{} in the output layer}

    \Return \ClassifierPhi{}, \ConfigPhi{}
\end{algorithm}

This part describes how to encode
a neural network~\Classifier{}
into formula~\ClassifierPhi{}
(cf. Sect.~\ref{sec:encoding})
while
optionally fixing selected neurons to one activation phase---%
that is,
selecting one branch of the \Relu{} splits (\ref{eq:nn:act}).
In Fig.~\ref{fig:methods:spexplain},
\ClassifierPhi{} is a conjunct of the model encoding~$\psi$.
If neurons are fixed,
\ClassifierPhi{} encodes a stricter variant of~\Classifier{}.
To ensure that the resulting explanation~\Expl{} is valid w.r.t.~\Classifier{},
the \textbf{Encoder} also produces configuration constraints~\ConfigPhi{}:
in Fig.~\ref{fig:methods:spexplain},
these are applied to the intermediate explanation~$\Expl'$
produced by the \textbf{Generalizer}
such that $\Expl := \Expl' \land \ConfigPhi{}$.
Then,
\Expl{} constrains parts of the feature space
where only the fixed activation phase of neurons
is reachable.
This technique is \emph{orthogonal} to the decision guidance above,
which addresses only optional neuron activation guiding constraints.
Alg.~\ref{Alg:Encode:Fix} describes
the process of encoding~\ClassifierPhi{}
and configuration constraints~\ConfigPhi{},
given \Classifier{}, a sample point~\Sample{},
and a set of neurons~\FixConf{} to be fixed.
The set \FixConf{} is an input similar to \GuideConf{} in Alg.~\ref{Alg:Encode:Guide},
but forms a \emph{separate} part of the NN configurations in Fig.~\ref{fig:methods:spexplain}.
Alg.~\ref{Alg:Encode:Fix} encodes the constraints
depending on the input feature variables $\FVec{} = (x_1,\dots,x_m)$
and again strictly follows the order
of layers and neurons.
Direct encoding using an ITE term is applied
for neurons not specified in~\FixConf{};
otherwise,
only one of the branches is encoded,
depending on the evaluation of the pre-activation value~\zed{}
at the sample point~\Sample.
The algorithm also stores the pre-activation constraint
into configuration constraints~\ConfigPhi{}\Hide{\footnote{
    It would not be sufficient to only encode the pre-activation constraint
    as a part of~\ClassifierPhi{},
    because later the proof generalizer would still be allowed to relax this constraint,
    while it must hold firmly.
}}
to maintain the restriction
and soundness of the resulting explanations.
\NEW{%
For example,
consider the same NN
and
neuron~\neuron[1].
If $\neuron[1] \notin \FixConf$,
its activation is encoded as
$\act[1] = \mathit{ite}(\zed[1] > 0, \zed[1], 0)$
as part of \ClassifierPhi{}.
If $\neuron[1] \in \FixConf$,
given sample $\Sample_1 = (1,1,3)$,
$\act[1] = \zed[1]$;
given sample $\Sample_2 = (0,2,0)$,
$\act[1] = 0$,
while adding $\zed[1] > 0$
and
$\zed[1] \leq 0$
to \ConfigPhi{},
respectively.
\ifSpringer
\else
More detailed example follows in Appendix~\ref{apx:example}.
\fi
}%
At the end of Alg.~\ref{Alg:Encode:Fix},
it encodes output neuron constraints
where no \Relu{} function is applied.
Notably,
such a \Expl{} is a correct logical explanation w.r.t.~\Classifier{}
(Definition~\ref{def:spaceex})
since the form of the formula is not restricted.

\begin{theorem}
\label{th:FixExpl}
    Given a neural network~\Classifier{}
    and
    a sample point $\Sample \in \FSpace$
    classified as class~$c$
    by \Classifier{},
    let
    $\ClassifierPhi'$ and \ConfigPhi{} be the formulas produced by Alg.~\ref{Alg:Encode:Fix},
    $\Classifier'$ be the restricted NN encoded by $\ClassifierPhi'$,
    and
    $\Expl'$ be a logical explanation of~$\Classifier'$ and~$c$
    such that
    $\SampleExpl \implies \Expl'$.
    Then,
    $\Expl := \Expl' \land \ConfigPhi$
    is a logical explanation of~\Classifier{} and class~$c$.
\end{theorem}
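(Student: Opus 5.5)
The plan is to check the two requirements of Definition~\ref{def:spaceex} for $\Expl = \Expl' \land \ConfigPhi$: satisfiability, and the implication $\Expl(\FVec) \implies \ClFun(\FVec) = c$ for every $\FVec \in \FSpace$. The central observation is that on the region described by $\ConfigPhi$, the restricted network $\Classifier'$ and the original network $\Classifier$ compute the same function. Once that is established, the guarantee carried by $\Expl'$ for $\Classifier'$ transfers to $\Classifier$.

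\textbf{Satisfiability.} I will show that $\Sample$ itself satisfies $\Expl$.
By hypothesis $\SampleExpl \implies \Expl'$, so $\Expl'(\Sample)$ holds.
For $\ConfigPhi$, each conjunct is added by Alg.~\ref{Alg:Encode:Fix} for a neuron $\neuron \in \FixConf$. If $\zed(\Sample) > 0$, the conjunct is $\zed > 0$; otherwise it is $\zed \leq 0$. Either way it is true at $\Sample$.
One subtlety is that $\zed$ in these conjuncts is the pre-activation as a function of the input. If it is read through the (restricted) encoding, I rely on the agreement lemma below, applied at $\Sample$.

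\textbf{Agreement lemma.} For every $\FVec$ with $\ConfigPhi(\FVec)$, and for every neuron $\neuron$, the activations of $\Classifier'$ and $\Classifier$ coincide: $\act(\FVec)$ is the same in both networks. I will prove this by induction on the layer $\ell$.

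1. \emph{Base case.} Layer $0$ is trivial, since both networks have $\act[0] = x_i$.
2. \emph{Inductive step, pre-activations.} Assume activations agree on layer $\ell-1$. Pre-activations on layer $\ell$ are the same affine function~(\ref{eq:nn:zed}) of those activations, so they agree.
3. \emph{Unfixed neurons.} For $\neuron \notin \FixConf$, both networks apply the identical ITE encoding of $\Relu$, so the activations agree.
4. \emph{Fixed neurons.} For $\neuron \in \FixConf$, $\ConfigPhi$ forces the same phase as at $\Sample$.
   If $\zed(\Sample) > 0$, then $\zed(\FVec) > 0$, so $\Relu(\zed(\FVec)) = \zed(\FVec)$. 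This matches the fixed encoding $\act = \zed$.
   Otherwise $\zed(\FVec) \leq 0$, so $\Relu(\zed(\FVec)) = 0$, matching $\act = 0$.
5. \emph{Output layer.} Both networks use the identity on the output layer, so output activations agree and hence $\ClFun'(\FVec) = \ClFun(\FVec)$.

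\textbf{Correctness.} Take any $\FVec \in \FSpace$ with $\Expl(\FVec)$. Then $\Expl'(\FVec)$ holds, and since $\Expl'$ is a logical explanation of $\Classifier'$, we get $\ClFun'(\FVec) = c$. By the agreement lemma, $\ClFun(\FVec) = c$.

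\textbf{Expected obstacle.} The main difficulty is making precise what ``the restricted NN $\Classifier'$'' and ``$\zed$'' inside $\ConfigPhi$ mean. $\ConfigPhi$ constrains pre-activations that depend on earlier activations, which are themselves possibly fixed. So I must interpret the atoms of $\ConfigPhi$ as pre-activations computed by $\Classifier'$ (equivalently by the encoding $\ClassifierPhi'$), and carry the induction layer by layer. At each layer, agreement on earlier layers lets me identify these pre-activations with those of $\Classifier$. I will state this interpretation explicitly at the start, so that the layered induction covers both the satisfiability argument and the correctness argument.
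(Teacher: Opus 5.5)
Your proof is correct and follows the same route as the paper: $\Sample$ witnesses satisfiability, and correctness holds because the original and restricted networks coincide on the region described by $\ConfigPhi$, so the guarantee of $\Expl'$ for $\Classifier'$ transfers to $\Classifier$. The paper compresses this into the formula-level claim $\ClassifierPhi \land \ConfigPhi \equiv \ClassifierPhi'$ (whence $\Expl \land \psi$ is unsatisfiable). Your layered agreement lemma is the precise version of that claim, since only $\ClassifierPhi \land \ConfigPhi \implies \ClassifierPhi'$ holds and is needed; one caution is to verify $\ConfigPhi(\Sample)$ inside that layer-by-layer induction rather than by invoking the lemma, because $\ConfigPhi(\Sample)$ is the lemma's own hypothesis.
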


\newcommand*{\ProofFixExpl}{%
\begin{proof}
    Let
    \ClassifierPhi{} be the encoding of the original network~\Classifier{},
    then
    $\psi := \ClassifierPhi \land \DomainPhi \land \neg \ClPhi$
    encodes the change of classification~$c$ in~\Classifier{}.
    We need to prove that
    \Expl{} is satisfiable
    and
    $\Expl \land \psi$ is unsatisfiable.
    The satisfiability of
    \Expl{} (i.e., $\Expl' \land \ConfigPhi$)
    follows since
    $\SampleExpl \land \ConfigPhi$
    is satisfiable
    (\ConfigPhi{} is constructed based on \Sample{})
    and the premise
    $\SampleExpl \Rightarrow \Expl'$.
    Let
    $\psi' := \ClassifierPhi' \land \DomainPhi \land \neg \ClPhi$.
    Since
    $\Expl' \land \psi'$ is unsatisfiable,
    so is $\Expl' \land \psi \land \ConfigPhi$
    because
    $\ClassifierPhi \land \ConfigPhi \equiv \ClassifierPhi'$
    and hence
    $\psi \land \ConfigPhi \equiv \psi'$.
    Therefore,
    $\Expl \land \psi$ is also unsatisfiable.
    \QED
\end{proof}%
}
\ProofFixExpl{}
This universally
produces logical explanations from sample points.
For instance,
given \Classifier{}, sample~\Sample{} classified as~$c$,
\ClassifierPhi{} and \ConfigPhi{} produced by Alg.~\ref{Alg:Encode:Fix},
and
using an interpolant
$\Expl' = \Itp(\SampleExpl, \ClassifierPhi \land \DomainPhi \land \neg \ClPhi)$,
where \Itp{} is a Craig interpolator,
then
$\Expl := \Expl' \land \ConfigPhi$ is a logical explanation of~\Classifier{} and class~$c$
such that
$\SampleExpl \Rightarrow \Expl$.

\NEW{%
When
the activation phase of a neuron at sample~\Sample{}
is \emph{unstable}---%
i.e.,
pre-activation $\zed(\Sample)$ in Alg.~\ref{Alg:Encode:Fix} is close to zero---%
fixing such a neuron may be too restrictive,
potentially resulting in a non-robust explanation.
This can be avoided
by extending the condition
in line~\ref{Alg:Encode:Fix:Cond}
with a disjunction for
$|\zed(\Sample)| \leq \UnstableEps$,
using a small~\UnstableEps{}
(e.g., $\UnstableEps = 10^{-3}$
for normalized domains \Domain[i])
to enforce direct encoding.
}

A straightforward strategy is to set
$\GuideConf = \emptyset$ for Alg.~\ref{Alg:Encode:Guide}
and
$\FixConf = \HiddenNSet$ for Alg.~\ref{Alg:Encode:Fix},
which removes all branchings from \ClassifierPhi{}
(if no activations are unstable).
Even then,
$\psi$ may not be deterministic,
due to classification constraints~\ClPhi{}
(cf. Sect.~\ref{sec:encoding}).
The combination of Alg.~\ref{Alg:Encode:Fix} and Alg.~\ref{Alg:Encode:Guide}
can be optimized
not to produce guiding constraints for neurons that have been fixed.
Hence,
it still makes sense to apply the input sets with
$\FixConf{} \cap \GuideConf{} \neq \emptyset$
(neurons with unstable activation are not fixed),
and
a meaningful combination is even
$\FixConf = \GuideConf = \HiddenNSet$.
However,
if fixing all neurons is overly restrictive,
it is useful to identify~\FixConf{} as a subset of~\HiddenNSet{}.
%
The following paragraph introduces an algorithm to identify relevant neurons,
which is useful for finding an efficient combination
of configurations~\FixConf{} and~\GuideConf{}.

\Paragraph{Finding Relevant Neurons by Crossing the Decision Boundary}


\Hide{
{~}
\gri{For the compression reasons, I suggest to replace the pseudocode and the original text (in brown, below) with the blue text: the formula captures exactly the output of the algorithm}
{\color{brown}
Alg.~\ref{Alg:interesting_neurons} describes the identification
of relevant neurons in hidden layers.
Given a NN and two sample points~\Sample{} and~$\Sample'$
that are classified into different classes
by the network, it returns a set of \emph{relevant neurons}~\IntConf{},
where the activation phase
evaluated at the two samples differs.}
}

Relevant neurons $\IntConf$ in hidden layers can be identified
given two sample~points $\Sample, \Sample' \in \FSpace$
classified into different classes by the network
(i.e., $\ClFun(\Sample) \neq \ClFun(\Sample')$):
\begin{equation}
\label{eq:interesting_neurons}
\IntConf = \big\{ \neuron \mid \ell \in [1, L), i \in [1, n^{(\ell)}], (\act(\Sample)=0) \neq (\act(\Sample')=0 )\big\}
.
\end{equation}
\NEW{The \textbf{Encoder} leaves these neurons \emph{unfixed}.
By~(\ref{eq:interesting_neurons}),
their activation phase changes when evaluated at~\Sample{} and $\Sample'$
across the decision boundary.
This preserves the flexibility needed to capture feature relations associated with this boundary crossing.}
All other neurons
are fixed
($\FixConf = \HiddenNSet \setminus \IntConf$),
with
$\GuideConf = \HiddenNSet$ as the fallback.

\Hide{
\begin{algorithm}[t!]
    \caption{Identifying relevant neurons by crossing the boundaries}
    \label{Alg:interesting_neurons}
    \DontPrintSemicolon
    \KwIn{$\Classifier = \ClassifierTuple$, sample~points ~$\Sample, \Sample' \in \FSpace$ such that $\ClFun(\Sample) \neq \ClFun(\Sample')$}
    \KwOut{set~of~relevant~neurons~$\IntConf \subseteq \HiddenNSet$}
    \BlankLine
    \Comment{notation as in Alg.~\ref{Alg:Encode:Fix}}
    \BlankLine


    $\IntConf \gets \emptyset$

    \For(\Comment*[f]{iterate over all hidden layers}){$\ell \gets 1;\ \ell \leq L-1;\ \ell \gets \ell + 1$}{
        \For(\Comment*[f]{iterate over all neurons of layer~$\ell$}){$i \gets 1;\ i \leq n^{(\ell)};\ i \gets i + 1$}{

            \If(\Comment*[f]{check if activation phases differ}){$ (\act(\Sample)=0) \neq (\act(\Sample')=0 )$ }{
                            $\IntConf \gets \IntConf \cup \{\neuron\}$
            }

        }
    }
    \BlankLine
    \Return \IntConf
\end{algorithm}
}

\NEW{%
To restrict the state space and maximize the benefits of fixing,
\IntConf{} must be kept small.
For this reason,
}%
$\Sample'$ is computed as an \emph{adversarial example}~\cite{DBLP:journals/corr/GoodfellowSS14}
of sample~\Sample{},
by adding a small,
carefully chosen perturbation to~\Sample{}
that changes the classification.
In the literature, adversarial examples are typically constructed by solving a constrained optimization problem
over a small perturbation
to maximize the loss of the original class, often via iterative search within a neighborhood of the original sample~\Sample{}.
A popular technique for this purpose is \emph{Projected Gradient Descent} (PGD)~\cite{DBLP:conf/iclr/MadryMSTV18}.
Starting from an initial point close to~\Sample{},
using a fixed number of iterations,
PGD updates the position
in a direction that increases the loss of the classifier.
After each update, the point is projected back
into a small vicinity of~\Sample{}, bounding the perturbation
and the resulting adversarial example.


Overall,
each presented technique contributes to the performance
of computing NN explanations
and can be combined on demand,
as illustrated in Sect.~\ref{sec:exp}.

\section{Decision Guidance of Logical Solvers}
\label{sec:guide}


\SetKwFunction{CheckSat}{CheckSat*}
\SetKwFunction{SelectDecisionLiteral}{SelectDecisionLiteral*}
\SetKwFunction{SelectInternalDecisionLiteral}{SelectInternalDecisionLiteral}
\SetKwFunction{MakeNewVariables}{MakeNewVariables}
\SetKwFunction{MakeLiteral}{MakeLiteral}

\newcommand*{\Assignment}{\ensuremath{\mathbf{A}}}

\begin{algorithm}[t!]
    \caption{Check the satisfiability with propositional decision guidance}
    \label{Alg:CheckSat:Guide}
    \DontPrintSemicolon
    \KwIn{formula~\Formula, finite ordered set of \GuideN{} guiding constraints $(\GuidePhi[1], \GuidePhi[2], \dots, \GuidePhi[\GuideN])$}
    \KwOut{\Sat{} or \Unsat{} if \Formula{} is satisfiable or not, resp.}
    \BlankLine

    \ColorEmph{
    $(\GuideVar[1], \GuideVar[2], \dots, \GuideVar[\GuideN]) \gets \MakeNewVariables{\GuideN}$
    \Comment*{\GuideN{} fresh propositional variables}
    \label{Alg:CheckSat:Guide:Init:Begin}
    $\Formula \gets \Formula \land \bigwedge_{i=1}^{\GuideN} \big( \GuideVar[i] \implies \GuidePhi[i] \big)$
    \Comment*{update the formula}
    \label{Alg:CheckSat:Guide:Init:End}
    }

    \Comment{get a propositional assignment \Assignment}
    $\Assignment \gets \CheckSat{\Formula, (\GuideVar[1], \GuideVar[2], \dots, \GuideVar[\GuideN])}$
    \Comment*{calls \SelectDecisionLiteral inside}
    \label{Alg:CheckSat:Guide:CheckSat}


    \If(\Comment*[f]{full assignment}){\Assignment{} satisfies \Formula{} and contains all propositional variables}{
        \Return \Sat
    }

    \Return \Unsat

    \BlankLine

    \Function{\SelectDecisionLiteral{$\Assignment, (\GuideVar[1], \GuideVar[2], \dots, \GuideVar[\GuideN])$}}{
        \ColorEmph{
        \For(\Comment*[f]{strictly follow the order}){$i \gets 1;\ i \leq \GuideN;\ i \gets i + 1$}{
            \label{Alg:CheckSat:Guide:SelectUserLit:Begin}
            \lIf(\Comment*[f]{select literal \GuideVar[i] if not assigned yet}){$\GuideVar[i] \not \in \Assignment$}{%
                \Return \GuideVar[i]
            }
        }
        \label{Alg:CheckSat:Guide:SelectUserLit:End}
        }

        \BlankLine

        \Return \SelectInternalDecisionLiteral{\Assignment{}}
        \Comment*{use internal heuristics}
    }
\end{algorithm}

Logic-based explainability methods rely on efficient symbolic logical solvers,
such as SAT, SMT, or MILP,
as underlying computational engines.
These solvers are highly optimized for general constraint solving.
However, generic heuristics may not be optimal for specific domain-driven problems.
This paper proposes guiding constraints as a transferable mechanism to optimize the performance of logical solvers
for NN analysis.
A \emph{guiding constraint} is a formula
defining a preferred property rather than a mandatory requirement,
which can be a natural, yet optional,
part of the input.
The solver implementation
is still allowed to completely ignore it.
Given a formula~\Formula{} and a guiding constraint \GuidePhi{},
we introduce a \emph{fresh} propositional variable~\GuideVar{},
called the \emph{guiding variable} of \GuidePhi{}.
Notably,
\GuidePhi{} is not limited to existing variables,
literals, or even terms in \Formula{}.
Then, because \Formula{} is equisatisfiable with
$\Formula \land (\GuideVar \Rightarrow \GuidePhi)$,
a solver can enforce \GuidePhi{}
via a propositional decision on~\GuideVar{}.
Internally,
the solver constructs a sequence of guiding variables~\GuideVar[i]
that serve as systematic decision suggestions
for logical constraints~\GuidePhi[i].
%
Crucially,
the \emph{order} of decision suggestions \emph{matters}:
many models naturally follow a domain-specific topology
that is entirely lost in
standard logical encodings focused solely on firm constraints.
Moreover,
extracting an effective decision order
from structured models,
such as a NN aligned into layers,
is straightforward.

Alg.~\ref{Alg:CheckSat:Guide} describes the main idea
of the decision guidance
via propositional decisions
within a logical solver checking the satisfiability of a formula~\Formula{},
given an \emph{ordered} set of guiding constraints.
Although the algorithm is straightforward
and
introduces only a few solver modifications
(highlighted in blue),
implementation requires internal knowledge of the target tool
to integrate the feature into the decision heuristics,
often within a highly optimized environment.
In lines~\ref{Alg:CheckSat:Guide:Init:Begin}--\ref{Alg:CheckSat:Guide:Init:End},
the algorithm creates guiding variables and updates formula \Formula{}.
In line~\ref{Alg:CheckSat:Guide:CheckSat},
it calls \CheckSat,
which differs from standard solving
by using \SelectDecisionLiteral
instead of \SelectInternalDecisionLiteral
for decision choices of propositional literals.
In \SelectDecisionLiteral
(lines~\ref{Alg:CheckSat:Guide:SelectUserLit:Begin}--\ref{Alg:CheckSat:Guide:SelectUserLit:End}),
the solver first attempts to pick an unassigned guiding variable,
selecting the positive literal~\GuideVar[i]
as the decision literal.
As long as some~\GuideVar[i] remains unassigned,
decision guidance takes precedence
over default internal heuristics.
Notably,
\SelectDecisionLiteral affects only decision choices,
leaving conflict reasoning and propagation untouched.
Hence,
any \GuideVar[i] that conflicts with the current assignment
is skipped
($\GuideVar[i] \not \in \Assignment$ in line~\ref{Alg:CheckSat:Guide:SelectUserLit:End}).

When \GuidePhi[i] is itself a propositional literal
or has an existing propositional abstraction in the solver,
the implementation can omit creating \GuideVar[i]
and use that literal directly.
While Alg.~\ref{Alg:CheckSat:Guide} targets
DPLL-based SMT solvers,
frameworks supporting non-propositional decisions
(e.g., MCSat~\cite{DeMouraJ:13})
can implement decision guidance without explicit guiding variables.

A restricted form of this capability
was previously introduced in \Mathsat~\cite{mathsat5},
requiring guiding constraints to be propositional literals
in the input formula~\Formula{}.
The novel, unrestricted feature
was first implemented
in \Zthree~\cite{z3}
and \Opensmt{},
where
guiding the computation of NN explanations
yielded orders-of-magnitude speedups over unguided reasoning---%
a gain transferable to any supporting SMT solver.

This general concept applies to diverse logical solvers
and domains.
In this paper,
we leverage decision suggestions
to guide the NN explanation process
toward specific neuron activations,
aligned with a sample point
and
the topological order of a network.
Concretely,
this instantiation appears in
Fig.~\ref{fig:methods:spexplain}
within the \textbf{Solver}
implementing Alg.~\ref{Alg:CheckSat:Guide},
accepting guiding constraints $(\GuidePhi[1],\dots,\GuidePhi[\GuideN])$
from the \textbf{Encoder}
as input.
This application utilizes a subset of the full capability:
individual constraints~\GuidePhi[i]
(neuron activations)
are already present in \ClassifierPhi{},
though Alg.~\ref{Alg:CheckSat:Guide}
supports arbitrary fresh constraints.

\section{Experimental Evaluation}
\label{sec:exp}

We implemented the neuron activation-based approach
using
a parametrized encoding of the NN w.r.t. a specific configuration of the internal neurons
within the \textbf{Encoder}
(Alg.~\ref{Alg:Encode:Guide} and~\ref{Alg:Encode:Fix})\footnote{
    \url{https://github.com/usi-verification-and-security/spexplain}
}.
Computation of configurations that are based on~(\ref{eq:interesting_neurons}) is implemented
using TorchAttack library~\cite{kim2020torchattacks} for PGD.
We stress-tested the performance of our techniques on the classical image-classification datasets,
traditionally used to study the performance limits of logical explanation techniques for classification:
GTSRB (3072 inputs, 43 classes),
CIFAR10~\cite{krizhevsky2009learning} (3072 inputs, 10 classes),
and MNIST (784 inputs, 10 classes).
We also used two smaller tabular medical domain benchmarks
that operate on very concrete domain inputs
and are used by medical practitioners:
heart-attack risk~\cite{heart_disease_45} (13 inputs, 2 classes),
and obesity~\cite{obesity} (15 inputs, 7 classes) datasets.
Here,
not only the individual features
but also the relationships among them are highly relevant.
For each dataset,
we randomly selected 100 sample points,
and
trained fully connected NNs
with 1 to 14 layers
for the image-classification datasets
and
with 1 to 8 layers
for the medical datasets\footnote{
    In the medical datasets,
    the training
    of
    NNs beyond 8 layers
    often does not converge.
},
each
with 50 neurons per hidden layer.
%
This paper focuses primarily on the monolithic approach that is based on SMT,
evaluated in Sect.~\ref{sec:exp:smt}.
Additionally,
in Sect.~\ref{sec:exp:chc},
we evaluate an alternative experimental modular variant to observe
how internal neuron activation configurations influence non-interpolation-based NN analysis.
Although we omit the formal details of the modular encoding for brevity,
we provide the underlying intuition,
noting that specific optimizations from Alg.~\ref{Alg:Encode:Fix} can be effectively migrated to this setting.

\subsection{SMT and Craig interpolation Instantiation}
\label{sec:exp:smt}

\begin{figure}[t!]
    \centering
    \includegraphics[trim={0.35cm 0.4cm 0cm 0.1cm},clip, width=0.94\linewidth]{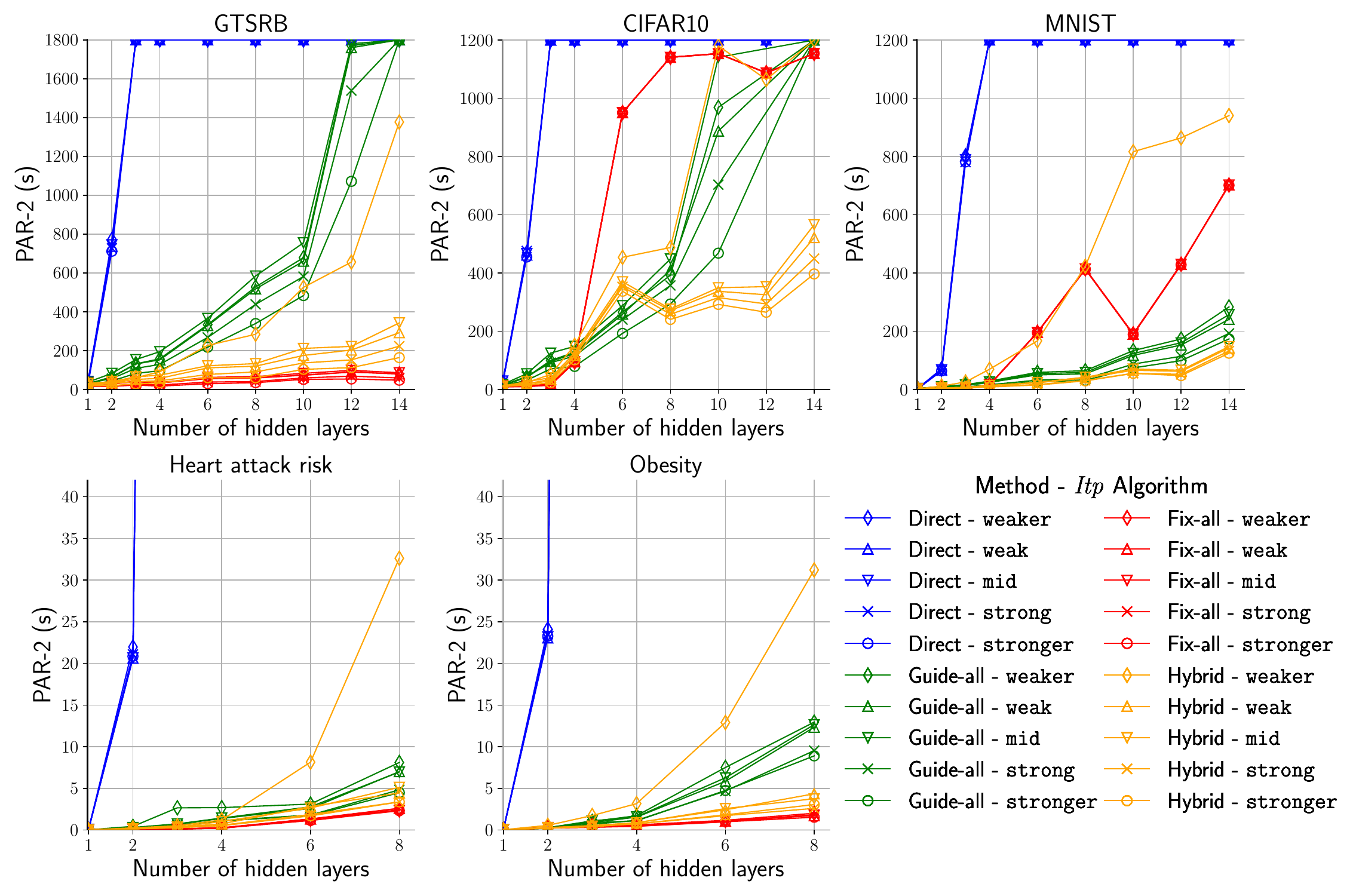}
    \caption{Performance comparison of our methods and interpolation algorithms.
     In medical datasets, it shows PAR-2 only up to 40 to make methods other than \naive{} visible.}
    \label{fig:runtime_compare50}
    \includegraphics[trim={0.35cm 0.4cm 0cm 0.1cm},clip, width=0.94\linewidth]{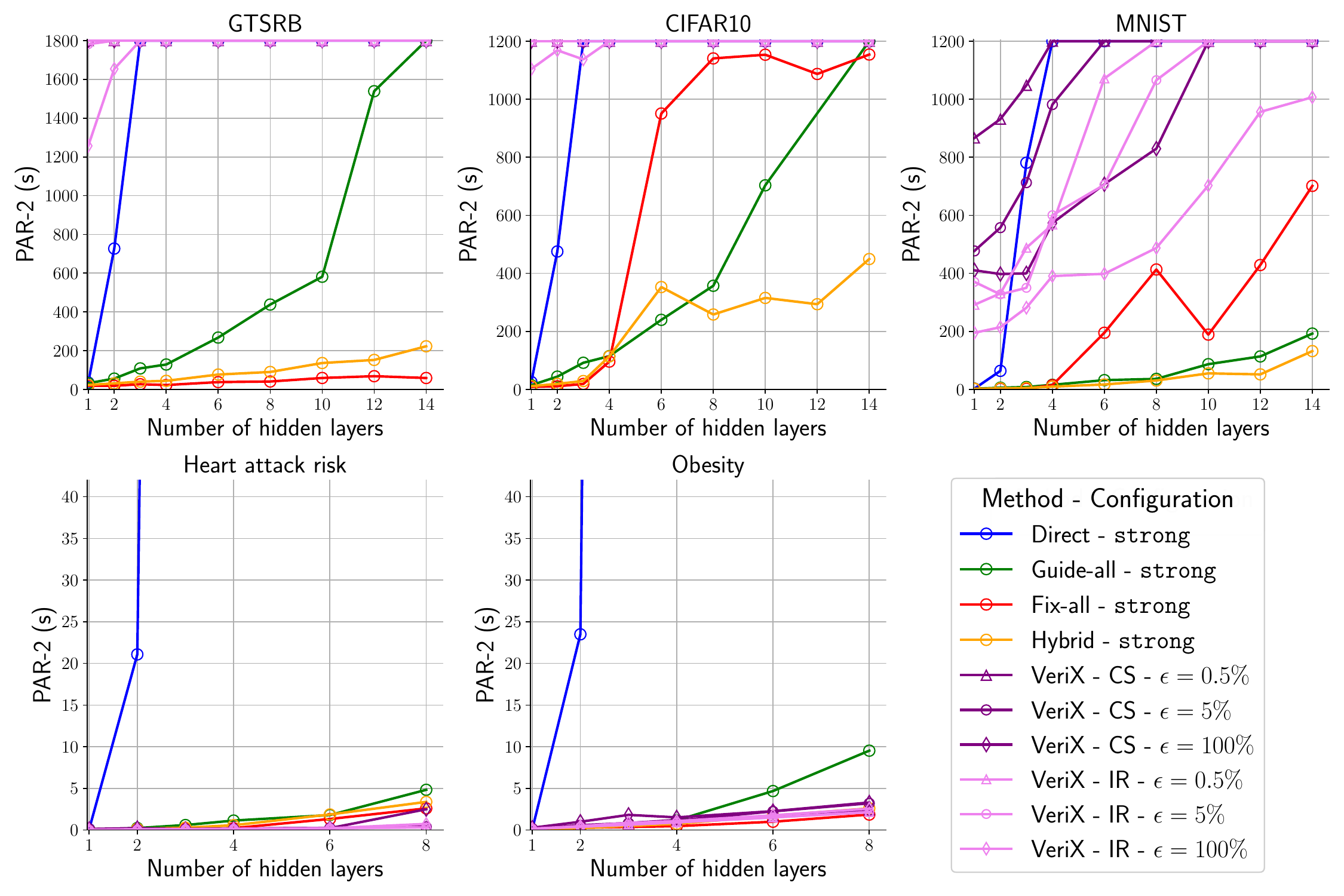}
    \caption{Performance comparison of \astrong{} variants with \Verix{}\Hide{ on networks with 50 neurons per layer}.}
    \label{fig:verix_compare}
\end{figure}

We instantiated the framework from Fig.~\ref{fig:methods:spexplain}
with the \Opensmt{}
as the \textbf{Solver}
and
used its interpolating features
as the procedures for the \textbf{Generalizer}.
To support decision guidance capability,
we implemented Alg.~\ref{Alg:CheckSat:Guide} in \Opensmt{} (\textbf{Solver}).
The choice of tools is intentional
to support a direct comparison with the experimental study~\cite{LabbafKBFWS:25},
referred to as \naive{}.
%
We compare four different configurations of our method:
    \naive: $\GuideConf = \FixConf = \emptyset$
    (the implementation used in~\cite{LabbafKBFWS:25});
    \guideall: $\GuideConf = \HiddenNSet$, $\FixConf = \emptyset$;
    \fixall:  $\GuideConf = \emptyset$, $\FixConf = \HiddenNSet$;
    and
    \hybrid: $\GuideConf = \HiddenNSet$,
    $\FixConf = \HiddenNSet \setminus \IntConf$
    (using~(\ref{eq:interesting_neurons})
    with adversarial example $\Sample'$).
Instances of interpolation algorithms
(cf. Sect.~\ref{sec:background})
use the notation
$\ItpDF \mapsto \astronger$,
$\ItpF \mapsto \astrong$,
$\ItpF' \mapsto \aweak$,
$\ItpDF' \mapsto \aweaker$,
and
$\ItpFactor \mapsto \amid{}$
with
$f := 0.5$.
Depending on the complexity of the dataset,
the evaluation uses the following computation time limits per individual explanation:
GTSRB: 15~min; MNIST and CIFAR10: 10~min; medical datasets: 5~min.
To aggregate both the runtime and occasional timeouts, we leverage the metric commonly used in the SMT community, \partwo{}, which is the runtime, or in the case of a timeout, twice the timeout.
For every NN and explanation technique,
we report the average \partwo{} out of all 100 explanations.
The experiments were run on a Linux 5.4 machine
with 256~GB physical memory
and
AMD\textsuperscript{\textregistered} EPYC 7452 64-threaded CPU.

Fig.~\ref{fig:runtime_compare50} shows how different configurations and interpolation algorithms scale to the number of layers,
running on particular datasets.
The x-axis shows the number of hidden layers, and the y-axis shows the \partwo{} average.
The plots show the clear trend that the original \naive{} method always times out when using models with at most 4 layers.
In contrast, the new techniques scale substantially better.
Notably, even \guideall{} that encodes the entire network still significantly improves over \naive{}.
The \fixall{} configuration dominates the performance in the medical domain tasks and GTSRB, but
experiences timeouts at CIFAR10 and MNIST within deeper networks
due to a weakness of the theory solver in \Opensmt{} (simplex)---%
it struggles at resolving the overly restricted problem,
approaching the worst-case execution.
The \hybrid{} configuration addresses this issue and usually achieves the best or second-best performance.
The choice of interpolation algorithm has low impact on scaling trends,
with one notable exception, \hybrid{} with \aweaker{},
when it produces many non-convex constraints.
Explaining with \astronger{} performs best
but it usually does not generalize \Hide{the sample point }at all.
In the following, we select the second-best performing algorithm \astrong.

We compare to the state-of-the-art NN explanation tool
\Verix{}\footnote{
    \NEW{
    \Verix{}+ could not be successfully executed on our evaluation suite,
    running only on the simpler benchmarks with shallow networks reported in~\cite{Wu_Li_Wu_Barrett_2026}.
    It would be also interesting to compare with the recent results of~\cite{bassan2025explaining},
    which remains future work.
    }
},
using two configurations~\cite{wu2023verixverifiedexplainabilitydeep}:
\emph{complete} verification with \emph{sensitivity} traversal (CS),
which relaxes the sample point most,
and
\emph{incomplete} verification with \emph{random} traversal (IR),
the most performant option.
A further parameter is the perturbation radius $\epsilon$ that defines the maximal vicinity of the sample point.
We adopted~\cite{wu2023verixverifiedexplainabilitydeep}
$\epsilon \in \{ 0.5\%, 5\%, 100\%  \}$.
The value $\epsilon=100\%$ is conceptually similar to our approach, as features are allowed to range in entire domain.
Fig.~\ref{fig:verix_compare} compares the results from Fig.~\ref{fig:runtime_compare50}
(using \astrong{}) and \Verix{}.
\Verix{} performs better using IR configuration or with higher values of~$\epsilon$.
\Verix{} performs significantly worse in the image-classification domain,
because it is sensitive to the number of input features~\cite{LabbafKBFWS:25},
and slightly outperforms all our methods only in the heart-attack risk case.
Recall that here the usefulness of the explanations produced by \Verix{} is limited by not capturing relations between features that are important in tabular data.
\NEW{We further experimented with more complex networks (200 neurons per layer).
These experiments are much more challenging,
with more methods timing out for some network sizes,
though key trends from Fig.~\ref{fig:verix_compare} persist.
For the results and discussion, refer to
\ifSpringer
the extended version of this paper~\cite{extended_version}.
\TodoFL{add arxiv citation.}
\else
Appendix~\ref{apx:performance}.
\fi
}%
\Hide{
Fig.~\ref{fig:verix_compare200} compares the performance of the same methods as in Fig.~\ref{fig:verix_compare}, selecting just the performant option $\epsilon=100\%$ for \Verix{}, on more complex networks with 200 neurons per hidden layer.
While some observations from Fig.~\ref{fig:verix_compare} remain similar, these experiments are much more challenging and only some of the methods manage to compute explanations for all sizes of the networks. Notably, \fixall{} still scales very well for GTSRB.
\Verix{} fails to compute the explanations in most image-classification cases, while it performs well for medical datasets.
}

\begin{figure}[t!]
    \centering
    \includegraphics[trim={0.3cm 0.28cm 2.8cm 1.3cm},clip, width=0.95\linewidth]{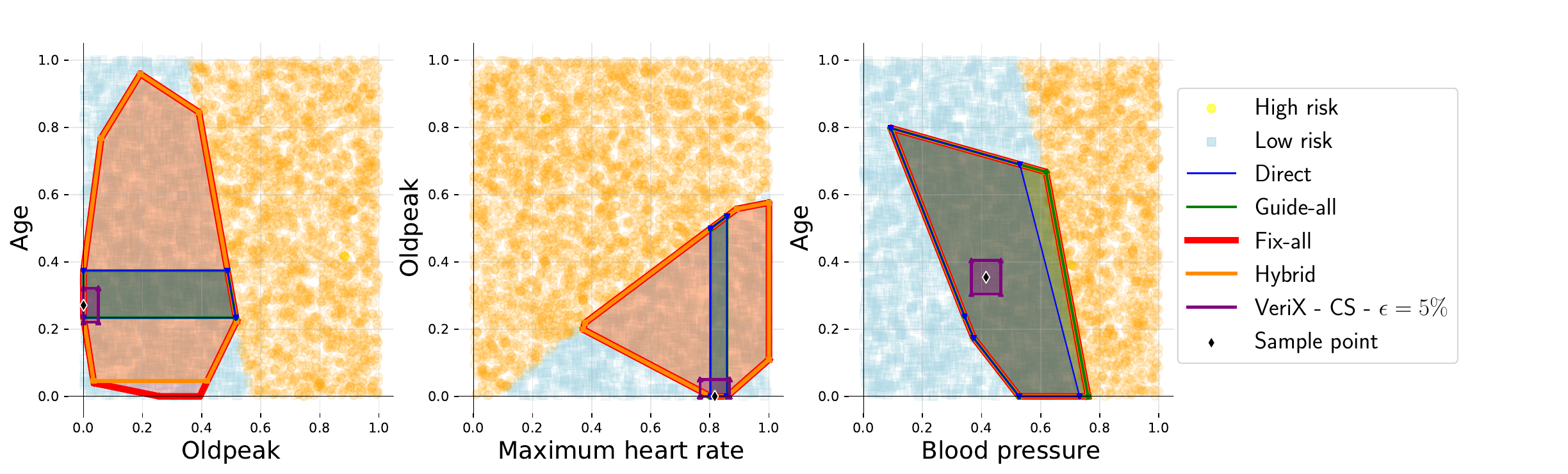}
    \caption{Visual comparison of selected 2D snapshots of explanation spaces
    of the heart-attack dataset (with two classes: high and low risk)
    produced by different methods.}
    \label{fig:class_space}
    \includegraphics[trim={1cm 0.2cm 2cm 0cm},clip,width=0.92\linewidth]{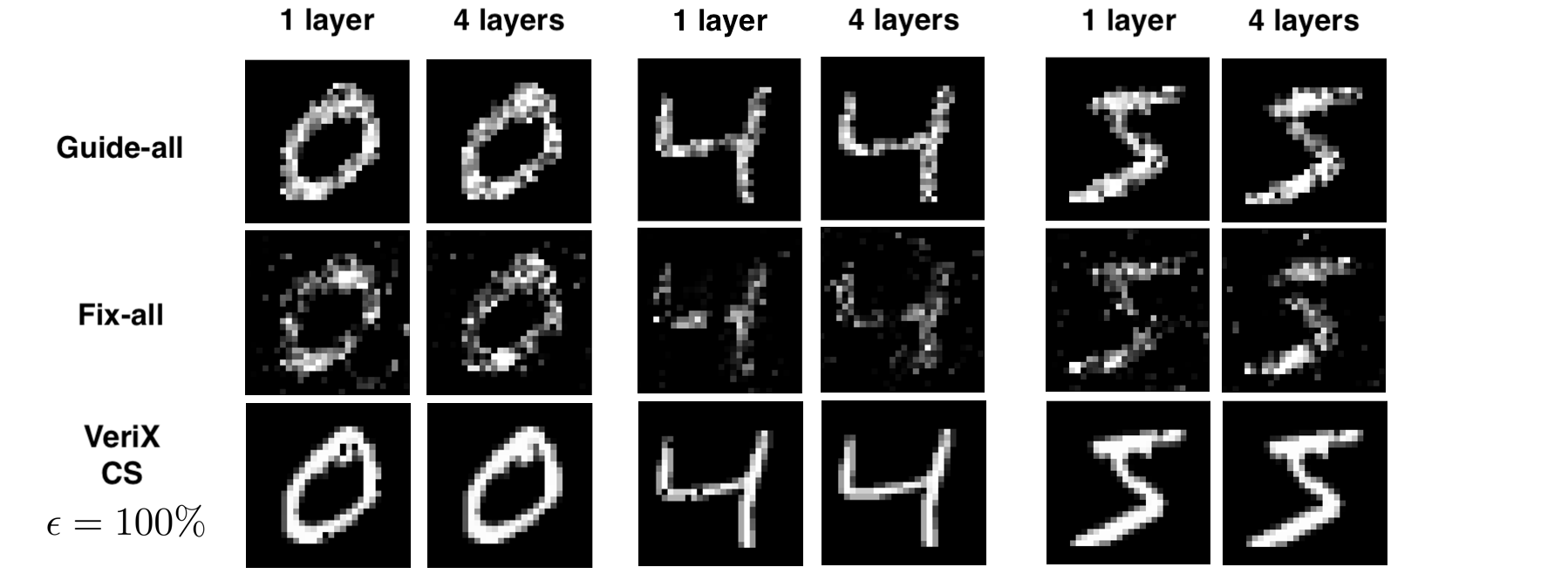}
    \caption{Visual comparison of generality of explanations produced by different techniques from three MNIST samples of digits 0, 4, and 5. 
    The number of pixels that are darker compared to the original image reflects more general explanations.}
    \label{fig:Norm_points_fixall}
\end{figure}

We checked the generality of \astrong{} explanations\footnote{%
    Algorithm \astrong{} focuses on a close neighborhood of the sample point
    but still often covers spaces with a similar shape compared to weaker algorithms~\cite{LabbafKBFWS:25}.
}
computed by our methods
using
subsumption checks for all pairs of explanations
for heart-attack networks having~1 to~4 layers\footnote{%
    Beyond 4 layers, \naive{} never finishes,
    and
    the subset checks are also very expensive.
}.
There is a clear trend in the generality:
98\% of explanations produced by \naive{}
and \guideall{}
are equivalent,
while
100\% and 95\% of \guideall{} explanations
are subsumed by \fixall{} and \hybrid{} explanations,
respectively,
and
98\% of \hybrid{} explanations are subsumed by \fixall{}.
If using weaker interpolation algorithms,
the checks
frequently result in inconclusive cases
(explanations intersect but do not subsume each other)
due to the higher flexibility
in the generalization.
While the dominance of \fixall{} may be counter-intuitive,
it reveals that
\begin{enumerate*}[label=(\arabic*)]
    \item the continuous area restricted by the neuron activations
    is still large enough for the generalizations,
    and
    \item since this area is defined by more deterministic,
    convex constraints,
    it is easier for the interpolation to generalize the constraints.
\end{enumerate*}
These results suggest
that the generality of explanations is not compromised
or even improves
when restricting the NN.

\NEW{
The following
further supports observations
on generality
and demonstrates that the explanations remain useful and interpretable.
}%
Fig.~\ref{fig:class_space}
visually compares
selected two-dimensional snapshots of explanations in the feature space~\cite{LabbafKBFWS:25},
using
the single-layer heart-attack network
and
more general algorithm \aweak{}.
Explanations computed by \fixall{} and \hybrid{}
again subsume other explanations,
providing the highest generality and also most precise approximation of the decision boundary.
\naive{} and \guideall{} are also general,
not restricted to a specific vicinity of the sample points.
By contrast,
the explanations computed by \Verix{}
(CS, $\epsilon = 5\%$)
represent only small rectangles in the feature space.

In image-recognition domain,
Fig.~\ref{fig:Norm_points_fixall} demonstrates the generality
of \astrong{} explanations for the MNIST dataset,
showing specific points from the feature space (i.e., digit images)
that are different from the original sample point
but still
covered by the explanation.
The points are computed using~\cite{Leopardi26},
the euclidean minimum norm,
and illustrate
how much each pixel of the sample
can be darkened while still guaranteeing the classification,
capturing the flexibility given by the explanation
and hence reflecting the size of its impact space.
It compares \guideall{}, \fixall{},
and \Verix{} (CS, $\epsilon = 100\%$),
confirming that \fixall{} yields more general explanations than \guideall{}.
Despite the use of \astrong{} algorithm,
both are still substantially more general than \Verix{}
where only a few isolated pixels are switched to black.
We also observed that \Verix{} (CS, $\epsilon = 5\%$)
slightly darkens many pixels,
but the difference from the original is barely visible.
%
The figure further illustrates that,
among all techniques,
the explanations remain similar
if computed from a NN with one or four layers,
suggesting that even restricted encoding of the network
(e.g. \fixall{})
does not harm the generality.

\NEW{
Compared to \Verix{},
the explanations computed using our technique result in more complex formulas,
potentially occupying megabytes rather than kilobytes of memory.
However,
this does not hinder their interpretability:
for instance,
computing the minimum norm point in Fig.~\ref{fig:Norm_points_fixall} requires only a few seconds.
Further details are provided in
\ifSpringer
the extended version~\cite{extended_version}.
\else
Appendix~\ref{apx:size}.
\fi
}

\subsection{Modular Encoding Instantiation}
\label{sec:exp:chc}

The core intuition behind a modular translation lies in treating a neural network not as a
monolithic mathematical function, but as a discrete-time transition system. In this framework, the
activation values of neurons in a given layer represent the ``state'' of the system. The transition from one
layer to the next is governed by a recurring logical predicate that encapsulates the linear transformation
(weights and biases) and the non-linear activation function (\Relu{}).
By defining a generic transition formula, we decouple the architectural definition from the specific
instance of the network. This modularity allows formal verification tools to 
\emph{discover inductive invariants} that describe the properties of neurons in any hidden layer that can also serve as explanations\footnote{The main benefit of this non-monolithic encoding is providing better control over the shapes of the computed explanations.}.  
The purpose of this section is to demonstrate that this alternative encoding benefits from neuron activation,
similarly to when the monolithic encoding is used.

A modular transition $\psi_{step}$ captures the logic of a single layer transition from state $\mathbf{x}^{(\ell)}$ to $\mathbf{x}^{(\ell+1)}$ for any hidden layer $0 \leq \ell < L$:

\begin{equation}
\begin{aligned}
\label{4}
\psi_{step}(\mathbf{x}, \mathbf{x}', \ell, \ell') = &\ 0 \leq \ell < L \land \bigwedge_{i=1}^{n} \Big( x_i' = \mathit{ite}\big(  (\sum_{j=1}^{n} w^{(\ell)}_{j,i} x_j + b^{(\ell)}_i) > 0, \\
& \sum_{j=1}^{n} w^{(\ell)}_{j,i} x_j + b^{(\ell)}_i, 0 \big) \Big) \land \ell'  = \ell + 1
.
\end{aligned}
\end{equation}

An inductive invariant $Inv$ is a property of neurons in each hidden layer. Given the initial condition
$\varphi_s \land \DomainPhi$ and the error state $\neg\psi_c$, $Inv$ intuitively serves as a symbolic summary of activations that never reach the misclassification region, i.e., the three conditions hold:

\begin{enumerate}
    \item \textbf{Initiation:} $\varphi_s \land \DomainPhi \implies Inv(\mathbf{x}, 0)$
    \item \textbf{Inductiveness:} $Inv(\mathbf{x}, \ell) \land \psi_{step}(\mathbf{x}, \mathbf{x}', \ell, \ell') \implies Inv(\mathbf{x}', \ell')$
    \item \textbf{Safety:} $Inv(\mathbf{x},L) \implies \psi_c$
\end{enumerate}

We adopted the winning strategy from Sect.~\ref{sec:exp:smt} \fixall{} using the precomputed values
$A^{(\ell)}_i$ of neuron pre-activations for every layer $\ell$ and neuron $i$ based on evaluating the neural network on the sample.
The modified modular transition $\psi_{step}'$ uses these values to pick only the active or inactive phase:

\begin{equation}
\begin{aligned}
\label{5}
\psi_{step}'(\mathbf{x}, \mathbf{x}', \ell, \ell') = &\ 0 \leq \ell < L \land \bigwedge_{i=1}^{n} \Big(x_i' = \mathit{ite}\big(
A^{(\ell)}_i > 0, \\
& \sum_{j=1}^{n} w^{(\ell)}_{j,i} x_j + b^{(\ell)}_i, 0 \big) \Big) \land \ell'  = \ell + 1
.
\end{aligned}
\end{equation}

As a proof of concept, we implemented the encoding on top of an existing solver~\cite{FKB17,FB18},
which
iteratively guesses template instantiations (e.g., as defined below) and checks the initiation and inductiveness conditions.
The conjunction of instantiations that passed these checks 
as well as the safety condition forms a computed explanation.
%
The templates have the following forms: $(\ell=\_ \implies x_i^\ell \leq \_)$, $(\ell=\_ \implies x_i^\ell \geq \_)$, $(\ell=\_ \implies x_i^\ell = \_)$, $ (x_i^\ell \leq \_)$, $(x_i^\ell \geq \_)$, $(x_i^\ell = \_)$, where each $\_$ is a placeholder.
Because the behavior of the invariant synthesis depends heavily on a smart choice of the templates, our implementation does not aim to compete with the monolithic approach.
Instead,
we evaluate the impact of the fixing strategy by comparing the invariant synthesis runtimes
when using
$\psi_{step}$ (the \naive{} strategy)
and $\psi_{step}'$ (the \fixall{} strategy).
%
%
%
%
The evaluation
follows the setup in Sect.~\ref{sec:exp:smt},
using the medical benchmarks\footnote{
    The current prototype invariant synthesis
    is driven by template selection
    and
    is sensitive to the input feature count,
    timing out on all the image-recognition benchmarks.
}.
Fig.~\ref{fig:runtime_compare_CHC} illustrates a clear advantage of computing explanations using the \fixall{} strategy (red lines) over the \naive{} strategy (blue lines),
which mostly times out.
Overall, for both monolithic and non-monolithic encoding of NNs, the data confirms that the fixing is a viable solution to computing logical explanations
significantly faster.



\begin{figure}[t!]
    \centering
    \includegraphics[trim={0.35cm 0.4cm 1cm 0.1cm},clip, width=0.815\linewidth]{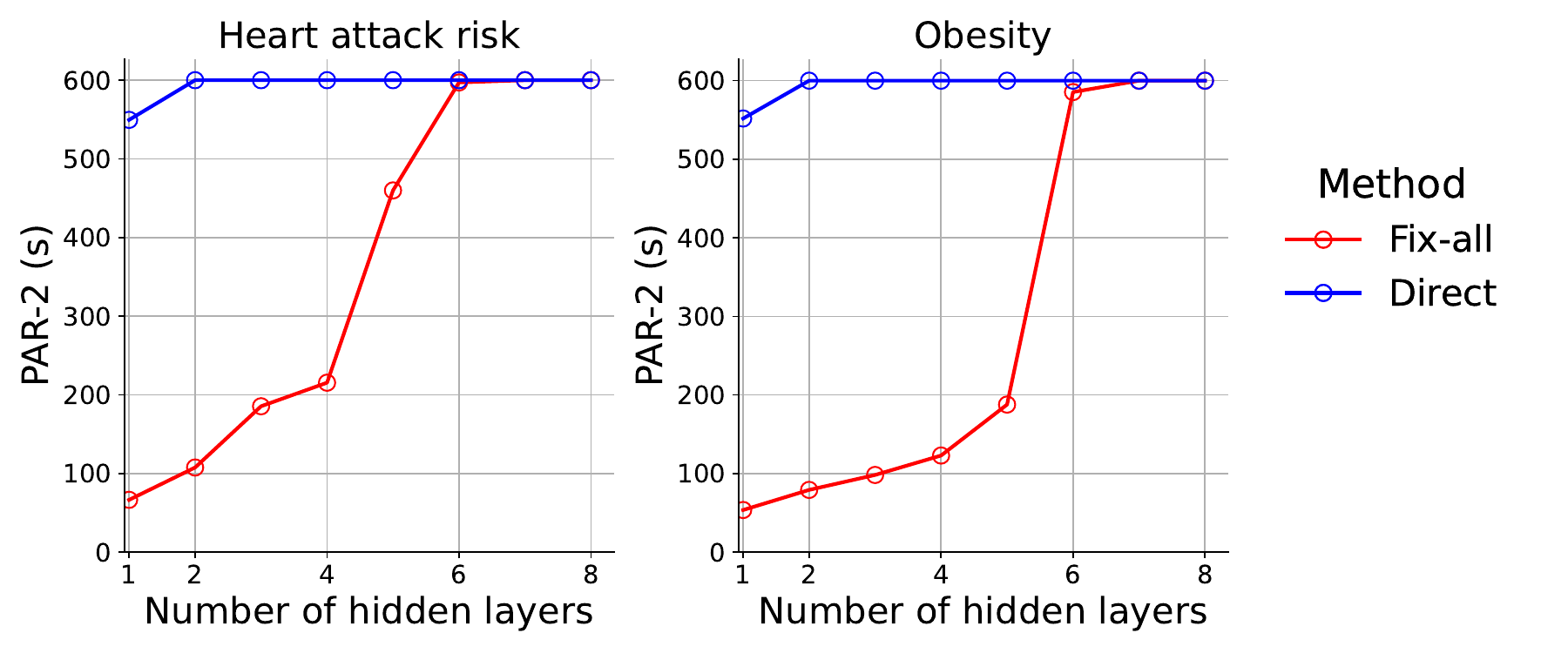}
    \caption{Performance of an enumerative invariant discovery with/without fixing.}
    \label{fig:runtime_compare_CHC}
\end{figure}

\section{Conclusion and Future Work}
\label{sec:conclusion}

This paper presented
a novel neuron activation-based symbolic framework for automated computation of logical explanations
that
\begin{enumerate*}[label=(\arabic*)]
    \item
    enables scaling to deep architectures
    not amenable to prior logic-based explainability methods,
    \item
    improves the generality of
    the resulting explanations
    that
    are provably correct
    and
    not restricted in shape,
    as opposed to explanations generated by prior methods that rely on specialized NN verifiers,
    \item
    isolates
    the NN-specific encoding
    from the reasoning algorithms
    that are based on general-purpose logical solvers,
    instantiated to \Opensmt{},
    utilizing the newly introduced
    transferable mechanism of
    guiding constraints,
    and
    \item
    is agnostic to the backend solver
    and
    the techniques
    that generalize the constraints on the sample point.
\end{enumerate*}
The experiments on diverse real-world benchmarks
from the domains of image recognition (GTSRB, CIFAR10, MNIST) and medicine
across a spectrum of complexities
showed that the new method
computes more general explanations
and
outperforms prior state-of-the-art approaches, often by orders of magnitude,
particularly the leading abduction-based tool for NNs \Verix{}.
The experiments also showed that the approach is generic
and applies beyond SMT encoding.

In future work,
we will aim to handle other NN structures, such as
convolutional NNs,
addressing more complex yet efficient classification tasks,
with no need for changes in our method at the theoretical level.
Moreover,
we will aim at embedding other activation functions beyond \Relu{},
such as Leaky \Relu{}.

\ifSpringer
\Acknowledgement{}
\fi

\bibliographystyle{splncs04}
\bibliography{references}

\ifSpringer
\else
\appendix

\section{Running Example}
\label{apx:example}


This section illustrates a direct SMT encoding (cf. Sect.~\ref{sec:encoding}) of the explainability problem,
using a toy example of a classifying neural network,
as well as modified encodings with the guidance driven by neuron activations (cf. Sect.~\ref{sec:activation-based}).

Fig.~\ref{fig:nn_example} shows a neural network (cf. Sect.~\ref{sec:background}) that maps 3-dimensional inputs $\mathbf{x}=(x_1, x_2, x_3)$ to two classes $\{c_1,c_2\}$
with input feature domains $\mathcal{D}_i = [0,4]$
(therefore, the feature space corresponds to a cube with edge length 4).
The network consists of 3 input neurons, 2 hidden neurons with $\Relu$ activation, and 2 output neurons.
The weights are set according to the edge labels in the figure and the biases are set to zero,
resulting in the following pre-activations:
\begin{alignat*}{6}
z_{1}^{(1)}(\FVec) &= 2x_1 + x_3
\qquad
&&
z_{1}^{(2)}(\FVec) = a_{1}^{(1)}(\FVec) - 4a_{2}^{(1)}(\FVec) 
\\
z_2^{(1)}(\FVec) &= -x_1 + x_2 -x_3
\qquad
&&
z_{2}^{(2)}(\FVec) = -a_{1}^{(1)}(\FVec) + 4a_{2}^{(1)}(\FVec) 
\end{alignat*}

If $a_{1}^{(2)}(\FVec) \geq a_{2}^{(2)}(\FVec)$, then $\kappa(\FVec) = c_1$
(i.e., the input is classified as class~$c_1$), otherwise, $\kappa(\FVec) = c_2$.
For example, for input sample $\Sample = (1,1,3)$, $a_{1}^{(2)}(\Sample) = 5$ and $a_{2}^{(2)}(\Sample) = -5$,
and therefore, sample~$\mathbf{s}$ is classified as class~$c_1$.

\begin{figure}[t!]
    \centering
    \includegraphics[width=0.4\linewidth]{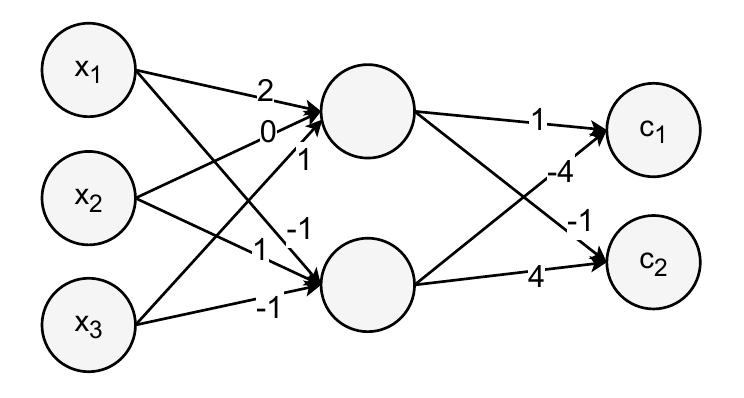}
    \caption{A simple neural network example}
    \label{fig:nn_example}
\end{figure}

\Paragraph{Direct Encoding}

Given the neural network above,
formula $\psi = \ClassifierPhi \land \DomainPhi \land \neg \ClPhi$
(cf. Sect.~\ref{sec:encoding}),
where $c = c_1$,
is encoded over the input feature variables $x_1, x_2, x_3$ as follows:
\begin{itemize}
\item the output neurons are encoded by
\[
\begin{aligned}
\ClassifierPhi := {} &
\bigl(
  a_{1}^{(2)} =
    \mathit{ite}(2x_1 + x_3 > 0,\; 2x_1 + x_3,\; 0)
\\[-0.3em]
&\qquad\qquad
  - 4 \cdot \mathit{ite}(-x_1 + x_2 - x_3 > 0,\; -x_1 + x_2 - x_3,\; 0)
\bigr)
\\
&\land
\bigl(
  a_{2}^{(2)} =
    -\, \mathit{ite}(2x_1 + x_3 > 0,\; 2x_1 + x_3,\; 0)
\\[-0.3em]
&\qquad\qquad
  + 4 \cdot \mathit{ite}(-x_1 + x_2 - x_3 > 0,\; -x_1 + x_2 - x_3,\; 0)
\bigr)
\end{aligned}
\]
using the auxiliary variables $a_{1}^{(2)}, a_{2}^{(2)}$
and inlining $a_{1}^{(1)}, a_{2}^{(1)}$
and $z_{1}^{(1)}, z_{2}^{(1)}$, $z_{1}^{(2)}, z_{2}^{(2)}$.

\item
the domains are encoded by
\[
\DomainPhi := (0 \leq x_1 \leq 4) \land (0 \leq x_2 \leq 4) \land (0 \leq x_3 \leq 4)\\
,\]

\item
the negation of classifying to class~$c = c_1$,
that is, classifying to class $c_2$,
is encoded by
\[
\neg \ClPhi := a_{1}^{(2)} < a_{2}^{(2)} .
\]
\end{itemize}

The input sample $\mathbf{s} = (1,1,3)$ classified as $c_1$
(i.e., $\kappa(\Sample) = c_1$)
is encoded as
\(
\SampleExpl = (x_1 = 1) \land (x_2 = 1) \land (x_3 = 3)
\).
Formula~$\psi$ is satisfiable,
while $\psi \land \SampleExpl$ is unsatisfiable.

Using the input configuration $\FixConf = \emptyset$
(i.e., with no fixing of activations of hidden neurons),
Alg.~\ref{Alg:Encode:Fix} (cf. Sect.~\ref{sec:activation-based})
outputs $\ConfigPhi \mapsto \top$ and a formula $\ClassifierPhi$
similar to the one above except it would also use auxiliary variables $a_{1}^{(1)}, a_{2}^{(1)}$
and not inline them.

\Paragraph{Neuron Activation-based Encoding}

This part shows examples of outputs of the algorithms in Sect.~\ref{sec:activation-based}.
Using the input configuration $\FixConf = \HiddenNSet$
(i.e., fixing all activations of hidden neurons)
and the sample $\Sample = (1,1,3)$,
Alg.~\ref{Alg:Encode:Fix} outputs
\[
\ClassifierPhi \mapsto \big( a_{1}^{(2)} = (2x_1 + x_3) - 4 \cdot 0 \big) \land \big( a_{2}^{(2)} = -(2x_1 + x_3) + 4 \cdot 0 \big)
\]
and
\[
\ConfigPhi \mapsto (2x_1 + x_3 > 0) \land (-x_1 + x_2 - x_3 \leq 0)
.
\]

Using the input configuration $\GuideConf = \HiddenNSet$
(i.e., guiding all activations of hidden neurons)
and the sample $\Sample = (1,1,3)$,
Alg.~\ref{Alg:Encode:Guide} outputs
guiding constraints
$\big( (2x_1 + x_3 > 0), (-x_1 + x_2 - x_3 \leq 0) \big)$.
Recall that Algs.~\ref{Alg:Encode:Guide} and~\ref{Alg:Encode:Fix}
are orthogonal,
although guiding a neuron that has already been fixed
is useless.

\section{Further Performance Experiments}
\label{apx:performance}

\begin{figure}
        \includegraphics[width=0.95\linewidth]{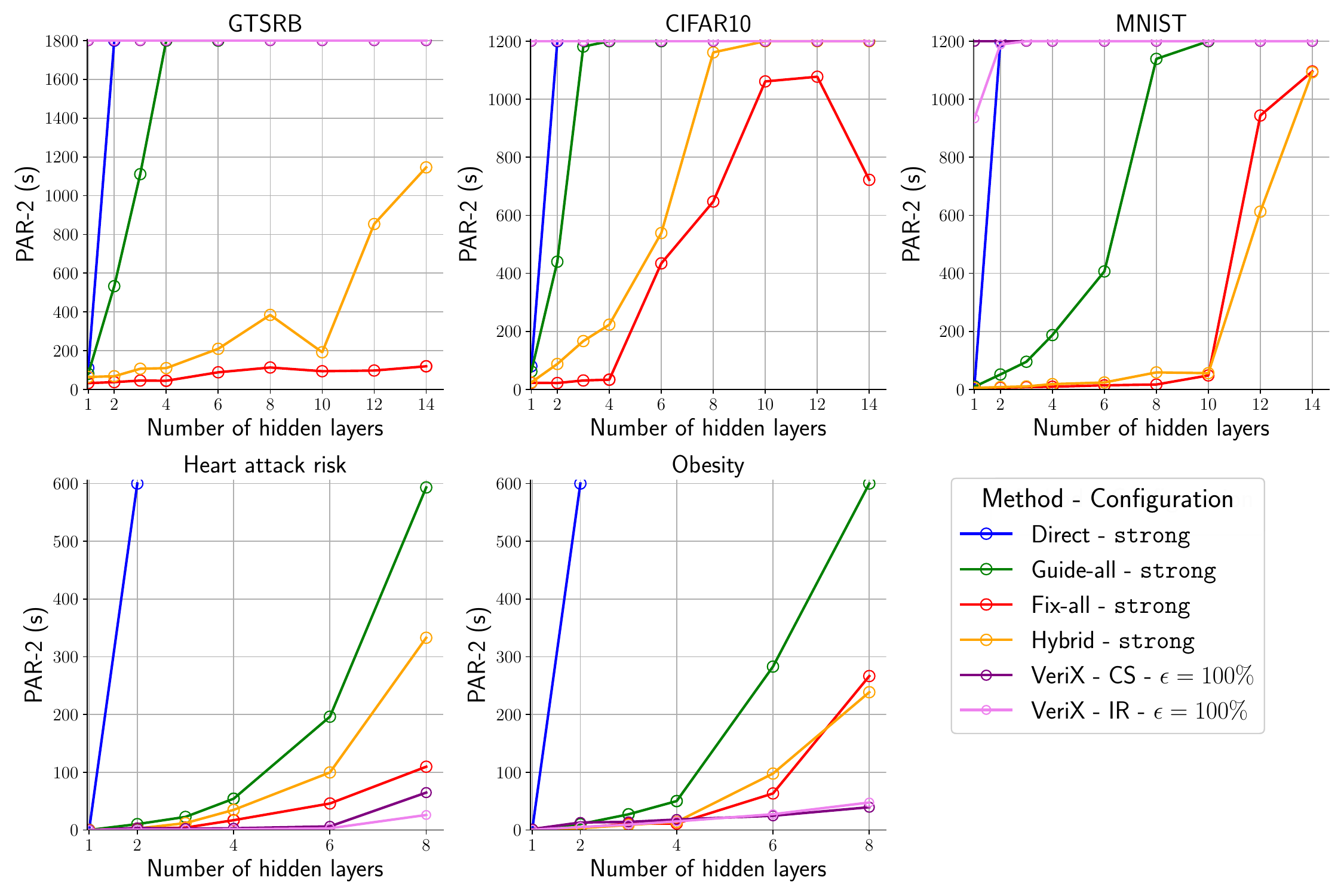}
    \caption{Performance comparison with \Verix{} on networks with 200 neurons per layer.}
    \label{fig:verix_compare200}
\end{figure}

Fig.~\ref{fig:verix_compare200} compares the performance of the same methods as in Fig.~\ref{fig:verix_compare}, selecting just the performant option $\epsilon=100\%$ for \Verix{}, on more complex networks with 200 neurons per hidden layer.
While some observations from Fig.~\ref{fig:verix_compare} remain similar, these experiments are much more challenging and only some of the methods manage to compute explanations for all sizes of the networks. Notably, \fixall{} still scales very well for GTSRB.
\Verix{} fails to compute the explanations in most image classification cases, while it performs well for medical datasets.


\section{Size of Explanations}
\label{apx:size}

\begin{figure}
    \centering
    \includegraphics[width=\linewidth]{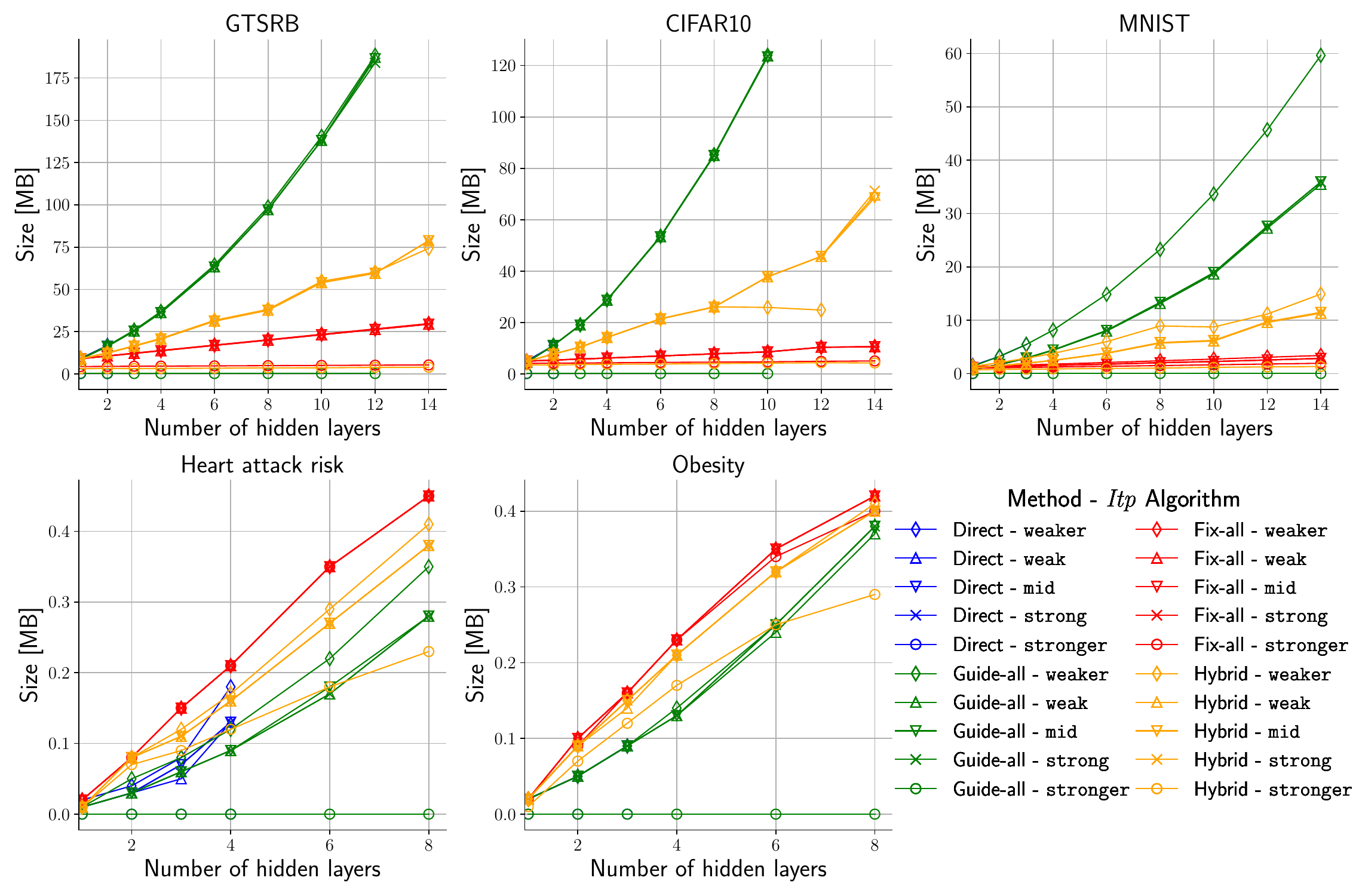}
    \caption{Explanation size comparison}
    \label{fig:plot:size}
\end{figure}

Our explanations are much more complex formulas compared to \Verix{}.
The explanations produced by \Verix{} take few KB of data, while the explanations by our method take up to tens or hundreds of MB per explanation depending on the size of the model and input.
Fig. \ref{fig:plot:size} shows the relation between the depth of the model and the explanations size.
For image classification tasks, \guideall{} produces bigger formulas, except for the \astronger{} interpolation algorithm, which is very restricted and in most cases represents a single sample point.

\fi

\end{document}